\DeclareMathOperator{\N}{N}
\title{On the list decodability  of Rank Metric codes}
\author{\IEEEauthorblockN{Rocco Trombetti\IEEEauthorrefmark{1} and
Ferdinando Zullo\IEEEauthorrefmark{2}}
\IEEEauthorblockA{\IEEEauthorrefmark{1}Dipartimento di Matematica e Applicazioni ``R. Caccioppoli", Universit\`{a} degli \\Studi di Napoli ``Federico II", I-80126 Napoli, Italy}
\IEEEauthorblockA{\IEEEauthorrefmark{2}Dipartimento di Matematica e Fisica, Universit\`{a} degli \\Studi della Campania ``Luigi Vanvitelli", I-81100 Caserta, Italy }
\thanks{This work is supported by the Research Project of MIUR (Italian Office for University and Research) ``Strutture Geometriche, Combinatoria e loro Applicazioni" 2012. The second author was also supported by the project ''VALERE: Vanvitelli pEr la RicErca" of the University of Campania ''Luigi Vanvitelli.''
	
Corresponding author: R. Trombetti (email: rtrombet@unina.it) }}
\newcommand{\cG}{{\mathcal G}}
\newcommand{\cH}{{\mathcal H}}
\newcommand{\cD}{{\mathcal D}}
\newcommand{\cS}{{\mathcal S}}
\newcommand{\F}{{\mathbb F}}
\newcommand{\Z}{{\mathbb Z}}
\newcommand{\End}{\mathrm{End}}
\newcommand{\la}{\langle}
\newcommand{\ra}{\rangle}
\renewcommand{\mod}{\hbox{{\rm mod}\,}}
\newtheorem{theorem}{Theorem}
\newtheorem{lemma}[theorem]{Lemma}
\newtheorem{example}[theorem]{Example}
\newtheorem{corollary}[theorem]{Corollary}
\newtheorem{proposition}[theorem]{Proposition}
\theoremstyle{definition}
\newtheoremstyle{TheoremNum}
{\topsep}{\topsep}              
{\itshape}                      
{}                              
{\bfseries}                     
{.}                             
{ }                             
{\thmname{#1}\thmnote{ \bfseries #3}}
\newtheorem{remark}{Remark}
\DeclareMathOperator{\PG}{\mathrm{PG}}
\begin{document}

\maketitle

\begin{abstract} Let $k,n,m \in \Z^+$ be integers such that $k\leq n \leq m$, let $\mathrm{G}_{n,k}\in \F_{q^m}^n$ be a Delsarte-Gabidulin code (\cite{delsarte_bilinear_1978},\cite{gabidulin_MRD_1985}).

In \cite{wachter-zhe_2013}, Wachter-Zeh proved that codes belonging to this family cannot be efficiently list decoded for any radius $\tau$, providing $\tau$ is large enough. This achievement essentially relies on proving a lower bound for the list size of some specific words in $\F_{q^m}^n \setminus \mathrm{G}_{n,k}$;  \cite[Theorem 1]{ wachter-zhe_2013}.

In \cite{raviv_2016}, Raviv and Wachter-Zeh improved this bound in a special case, i.e. when $n\mid m$. As a consequence, they were able to detect infinite families of Delsarte-Gabidulin codes that cannot be efficiently list decoded at all.

In this article we determine similar lower bounds for Maximum Rank Distance codes belonging to a wider class of examples, containing Generalized Gabidulin codes, Generalized Twisted Gabidulin codes, and examples recently described by the first author and Yue Zhou in \cite{trombetti_new}. By exploiting arguments such like those used in \cite{wachter-zhe_2013} and \cite{raviv_2016}, when $n\mid m$, we also show infinite families of generalized Gabidulin codes that cannot be list decoded efficiently at any radius greater than or equal to $\left\lfloor \frac{d-1}2 \right\rfloor+1$, where $d$ is its minimum distance. Nonetheless, in all the examples belonging to above mentioned class, we detect infinite families that cannot be list decoded efficiently at any radius greater than or equal to $\left\lfloor \frac{d-1}2 \right\rfloor+2$, where $d$ is its minimum distance. In particular, this leads to show infinite families of Gabidulin codes, with underlying parameters not covered by \cite[Theorem 4]{raviv_2016}, having this decodability defect.

Finally, relying on the properties of a set of subspace trinomials recently presented in \cite{McGuireMueller}, we are able to prove our main result, that is any rank-metric code of $\F_{q^m}^n$ of order $q^{kn}$ with $n$ dividing $m$, such that $4n-3$ is a square in $\mathbb{Z}$ and containing $\mathrm{G}_{n,2}$, is not efficiently list decodable at some values of the radius $\tau$.
\end{abstract}

\section{Introduction}
Let $q$ be a prime power. The set $\F_q^{m\times n}$ of all $m\times n$ matrices over $\F_q$, is an $\F_q$-vector space. The \emph{rank metric distance} on $\F_q^{m\times n}$ is defined by
\[d(A,B)=\mathrm{rk}(A-B) \,\, \text{for} \,\, A,B\in \F_q^{m\times n},\]
where $\mathrm{rk}(C)$ stands for the rank of $C$.

A subset $\mathrm{C}\subseteq \F_q^{m\times n}$ is called a \emph{rank-metric code}. The \emph{minimum distance} of $\mathrm{C}$ is
\[d(\mathrm{C})=\min_{A,B\in \mathrm{C}, A\neq B} \{d(A,B)\}.\]
When $\mathrm{C}$ is a subspace of $\F_q^{m\times n}$, we say that $\mathrm{C}$ is an $\F_q$-linear code and its dimension $\dim_{\F_q}(\mathrm{C})$ is defined to be the dimension of $\mathrm{C}$ as a subspace over $\F_q$. It is well-known that
\begin{equation}\label{singletonbound}
|\mathrm{C}| \le q^{\max\{m,n\}(\min\{m,n\}-d(\mathrm{C})+1)}.
\end{equation}
When the equality holds, we call $\mathrm{C}$ a \emph{maximum rank distance} (MRD for short) code.

For MRD codes with minimum distance less than $\min\{m,n\}$, there are a few known constructions. The first and most famous family is due to Delsarte \cite{delsarte_bilinear_1978} and Gabidulin \cite{gabidulin_MRD_1985} who found it independently. This family is later generalized by Kshevetskiy and Gabidulin in \cite{kshevetskiy_new_2005}, and we often call them \emph{generalized Gabidulin codes}. More recently in \cite{sheekey_new_2016}, the author exhibited two infinite families of linear MRD codes which are not equivalent to generalized Gabidulin codes. We call them {\it twisted Gabidulin codes} and {\it generalized twisted Gabidulin} codes. In \cite{lunardon_generalized_2015} it was shown that the latter family contains both generalized Gabidulin codes and twisted Gabidulin codes as proper subsets, and in \cite{OzbudakOtal} this family was further generalized. Finally in \cite{trombetti_new} the authors presented a new family of MRD codes in the case when $m$ is even.
Further families of MRD codes are known for some particular values of the parameters \cite{BZZ,CMPZ,CsMPZh,CsMZ2018,MMZ}.

\medskip
\noindent Consider the set of $q$-polynomials with coefficients in $\F_{q^m}$; i.e., the set of polynomials defined as follows: \[\mathcal{L}_{m,q}[x]=\left\{\sum_{i=0}^{l} c_i x^{q^i}: c_i\in \F_{q^m} \,\, \ell \in \mathbb{N}_0 \right\}.\]

Any polynomial $f$ in $\mathcal{L}_{m,q}[x]$ gives rise to an $\F_q$-linear map $x\in \F_{q^m} \mapsto f(x) \in \F_{q^m}$. If $c_l \neq 0$ we will refer to $l$ as to the $q$-degree of $f$.

It is well known that $$(\mathcal{L}_{m,q}[x] / (x^{q^m}-x),+,\circ,\cdot),$$ where $+$ is the addition of maps, $\circ$ is the composition of maps and $\cdot$ is the scalar multiplication by elements of $\F_q$, is isomorphic to the algebra of $m\times m$ matrices over $\F_q,$ and hence to $\End_{\F_q}(\F_{q^m})$; i.e., the set of endomorphisms on $\F_{q^m}$ seen as an $\F_q$-algebra. In the following we will denote this algebra by the symbol $\tilde{\mathcal{L}}_{m,q}[x]$ and we will always silently identify the elements of $\tilde{\mathcal{L}}_{m,q}[x]$ with the endomorphisms of $\F_{q^m}$ they represent. Consequently, we will speak also of \emph{kernel} and \emph{rank} of a polynomial meaning by this the kernel and rank of the corresponding endomorphism.
Clearly, the kernel of $f\in \tilde{\mathcal{L}}_{m,q}[x]$ coincides with the set of the roots of $f$ over $\F_{q^m}$ and as usual $\dim_{\F_q} \mathrm{Im}(f)+\dim_{\F_q} \ker(f)=m$.

Let $\sigma$ be a generator of the Galois group $\mathrm{Gal}(\F_{q^m}\colon\F_q)$. A $\sigma$-linearized polynomial over $\F_{q^m}$, is a polynomial of type $f(x)=c_0x+c_1x^{\sigma}+\cdots+c_{\ell}x^{\sigma^{\ell}}$ with $c_j \in \F_{q^m}$. If $c_{\ell} \neq 0$, we say that $\ell$ is the $\sigma$-degree of the polynomial and we will denote it by $\deg_{\sigma} f$. It is clear that a $\sigma$-linearized polynomial can always be seen as an element of $\mathcal{L}_{m,q}[x]$.

From now on suppose $n \leq m$. All above mentioned examples produce MRD codes that can be presented in terms of so called {\it puncturing} operation applied to a subset of $\mathcal{L}_{m,q}[x]$.

Precisely, let $f_1$ and $f_2$ be two additive functions of $\F_{q^m}$ and let $k \leq m-1$. Following \cite[Proposition 1]{sheekey_newest_preprint}, the subset of $\mathcal{L}_{m,q}[x]$

\begin{small}
\begin{equation}\label{eq:codes}
\cH_{m,k,{\sigma}}(f_1,f_2)=\left\{f_1(a)x + \sum_{i=1}^{k-1}a_ix^{\sigma^i}+f_2(a)x^{\sigma^k} \colon a,a_i \in \F_{q^m}\right\}
\end{equation}
\end{small}

with $N(f_1(a)) \neq (-1)^{mk}N(f_2(a))$ for all $a\in \F_{q^m}^*$, where $N(f(a))= \N_{q^m/q}(f(a))=f(a)^{1+q+\ldots+q^{m-1}}$, defines an MRD code of $\F_q^{m \times m}$ with minimum distance $d=m-k+1$. For instance, if $f_1(a)=a$ and $f_2=0$, then $\cH_{m,k,\sigma}(f_1,f_2)$ is a generalized Gabidulin code (commonly indicated with the symbol $\cG_{m,k,\sigma}$).

More in detail, Table \ref{table:ex} reports the examples of the so far known MRD codes of $\tilde{\mathcal{L}}_{m,q}[x]$ that can be represented as in (\ref{eq:codes}).

\begin{table*}[ht]\label{table:ex}
\caption{Known examples of MRD codes in $\tilde{\mathcal{L}}_{m,q}[x]$}
\centering
\begin{tabular}{c c c c c c c}
\hline\hline
\text{Symbol} & $\sigma$ & $f_1(a)$ & $f_2(a)$ & \text{Conditions} & \text{References} \\
\hline
$\cG_{m,k}$ & $q$      &  $a$           & 0             & --- & \cite{delsarte_bilinear_1978,gabidulin_MRD_1985} \\
$\cG_{m,k,\sigma}$ & $q^s$      &  $a$           & 0             & --- & \cite{kshevetskiy_new_2005} \\
$\cH_{m,k,\sigma}(\eta,h)$ &  $q^s$     &  $a$           & $\eta a^{q^h}$  & $N_{q^m/q}(\eta) \neq (-1)^{mk}$ & \cite{sheekey_new_2016,lunardon_generalized_2015} \\
$\overline{\cH}_{m,k,\sigma}(\eta,h)$ &  $q^s$     &  $a$           & $\eta a^{p^h}$  & $N_{q^m/p}(\eta) \neq (-1)^{mk}$ & \cite{OzbudakOtal} \\
$\cD_{m,k,\sigma}(\eta)$    &  $q^s$     &   $a$     &    $\eta b$ & $m$ \text{even}, $N_{q^m/q}(\eta) \notin \square$, $a,b \in \F_{q^{m/2}}$ & \cite{trombetti_new} \\
\hline
\end{tabular}
\\ where $\N_{q^m/q}(\eta)=\eta^{1+q+\ldots+q^{m-1}}$, $\N_{q^m/p}(\eta)=\eta^{1+p+\ldots+p^{m\ell-1}}$ ($q=p^\ell$) and $\square$ is the set of square elements in $\F_{q^m}$.
\end{table*}

All of the examples above contain a subcode equivalent to $\cG_{m,k-1,\sigma}$ and if $\cH_{m,k,\sigma}(f_1,f_2)$ is not equivalent to any generalized Gabidulin code, then $k-1$ is the maximum dimension of a subcode in it which is equivalent to a generalized Gabidulin code, i.e. $\cH_{m,k,\sigma}(f_1,f_2)$ has {\it Gabidulin index} $k-1$, (see \cite{GZ}).

There are other equivalent ways of representing a rank-metric code of $\F_q^{m\times n}$. For our purpose, we will see such codes also as subsets of $\F_{q^m}^n$.

For a vector $\mathbf{v}=(v_1,\ldots,v_n)\in \F_{q^m}^n$, we define its rank weight as follows
\[ \mathrm{rk}(\mathbf{v})=\dim_{\F_q}  \la v_1,\ldots, v_n \ra_{\F_q}. \]
The rank distance between two vectors $\mathbf{u}, \mathbf{v} \in \F_{q^m}^n$ is defined as $d(\mathbf{u},\mathbf{v})=\mathrm{rk}(\mathbf{u}-\mathbf{v})$.
A rank-metric code of $\F_{q^m}^n$ is a subset of $\F_{q^m}^n$ equipped with the aforementioned metric.
The same bound \eqref{singletonbound} holds and hence we can define again an MRD code $\mathrm{C}$ as the code whose parameters attain the equality in \eqref{singletonbound}, i.e. $|\mathrm{C}|=q^{mk}$ and for each $\mathbf{u},\mathbf{v}\in \mathrm{C}$ with $\mathbf{u}\neq\mathbf{v}$ we have that $\mathrm{rk}(\mathbf{u}-\mathbf{v})\geq n-k+1$.

Of course we may always jump from the model of linearized polynomials to $\F_{q^m}^m$. More in general we have the following

\begin{lemma}\label{lm:puncturing}
Let $\mathcal{C}$ be an MRD code of $\tilde{\mathcal{L}}_{m,q}$ with $|\mathcal{C}|=q^{mk}$ and $d(\mathcal{C})=m-k+1$.
Let $n$ be a positive integer greater than or equal to $k$ and the $\mathcal{S}=\{\alpha_1,\ldots,\alpha_n\}$ be an $n$-set in $\F_{q^m}$ (i.e.  $n$ $\F_q$-linearly independent elements).
Then the rank-metric code
\[ \mathrm{C}=\{(g(\alpha_1),\ldots,g(\alpha_n)) \colon g \in \mathcal{C}\}\subseteq \F_{q^m}^n \]
is an MRD code of $\F_{q^m}^n$ with $|\mathcal{C}|=q^{mk}$ and $d(\mathrm{C})=n-k+1$.
\end{lemma}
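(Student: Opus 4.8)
The plan is to analyze the $\F_q$-linear evaluation map
\[
\phi\colon \mathcal{C}\longrightarrow \F_{q^m}^n,\qquad g\longmapsto (g(\alpha_1),\ldots,g(\alpha_n)),
\]
and to show that it is injective and weight non-decreasing in the appropriate sense. Set $U=\la \alpha_1,\ldots,\alpha_n\ra_{\F_q}$, which is an $n$-dimensional $\F_q$-subspace of $\F_{q^m}$ because the $\alpha_i$ are $\F_q$-linearly independent. Since each $g\in\mathcal{C}$ is $\F_q$-linear, $\phi$ is additive, and the $\F_q$-span of the coordinates of $\phi(g)$ is exactly the image $g(U)$; hence the rank weight of the codeword $\phi(g)$ equals $\dim_{\F_q} g(U)$.

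First I would compute this rank weight via the rank--nullity theorem applied to the restriction $g|_U$. This gives
\[
\mathrm{rk}(\phi(g))=\dim_{\F_q} g(U)=n-\dim_{\F_q}(U\cap\ker g).
\]
Next I would use the hypothesis $d(\mathcal{C})=m-k+1$: every nonzero $g\in\mathcal{C}$ has rank at least $m-k+1$ as an endomorphism of $\F_{q^m}$, equivalently $\dim_{\F_q}\ker g\le k-1$. Consequently $\dim_{\F_q}(U\cap\ker g)\le \dim_{\F_q}\ker g\le k-1$, and since $k\le n$ this quantity is strictly smaller than $n=\dim_{\F_q}U$.

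From the last inequality two conclusions follow at once. On one hand $U\not\subseteq\ker g$ for every nonzero $g$, so $\phi(g)\neq 0$; being additive, $\phi$ is therefore injective and $|\mathrm{C}|=|\mathcal{C}|=q^{mk}$. On the other hand $\mathrm{rk}(\phi(g))=n-\dim_{\F_q}(U\cap\ker g)\ge n-(k-1)=n-k+1$ for every nonzero codeword, so $d(\mathrm{C})\ge n-k+1$. Finally I would invoke the Singleton bound \eqref{singletonbound}, which in $\F_{q^m}^n$ (with $n\le m$) reads $|\mathrm{C}|\le q^{m(n-d(\mathrm{C})+1)}$; combined with $|\mathrm{C}|=q^{mk}$ this forces $d(\mathrm{C})\le n-k+1$, whence $d(\mathrm{C})=n-k+1$ and equality holds, so $\mathrm{C}$ is MRD.

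The argument is essentially a single linear-algebra identity together with the Singleton bound, so there is no serious obstacle; the only point requiring care is to make sure that the intersection $U\cap\ker g$ cannot be too large, i.e. the step $\dim_{\F_q}(U\cap\ker g)\le k-1<n$, which simultaneously delivers injectivity and the distance bound and is where the hypotheses $d(\mathcal{C})=m-k+1$ and $k\le n$ are genuinely used.
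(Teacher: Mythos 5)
Your proof is correct, and in fact the paper offers nothing to compare it against: Lemma~\ref{lm:puncturing} is stated there without proof, as a standard ``puncturing'' fact (the text merely says one may ``jump from the model of linearized polynomials to $\F_{q^m}^n$''). Your argument --- rank--nullity applied to $g|_U$ giving $\mathrm{rk}(\phi(g)) = n - \dim_{\F_q}(U\cap\ker g)$, the kernel bound $\dim_{\F_q}\ker g \le k-1$ coming from $d(\mathcal{C})=m-k+1$, and the Singleton bound \eqref{singletonbound} to force $d(\mathrm{C})\le n-k+1$ --- is exactly the standard route and it supplies the proof the paper omits.

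One small imprecision worth fixing: the lemma does not assume $\mathcal{C}$ is $\F_q$-linear (and the paper's definition of MRD is purely metric, as a subset attaining \eqref{singletonbound}), so the step ``every nonzero $g\in\mathcal{C}$ has rank at least $m-k+1$'' is not literally available; minimum distance constrains \emph{differences} of distinct codewords, not weights. The repair is immediate: for distinct $g,h\in\mathcal{C}$ the polynomial $g-h$ (which need not lie in $\mathcal{C}$) satisfies $\mathrm{rk}(g-h)\ge m-k+1$, hence $\dim_{\F_q}\ker(g-h)\le k-1$, and since $\phi$ extends additively to all of $\mathcal{L}_{m,q}[x]$ you get
\[
d(\phi(g),\phi(h)) = \mathrm{rk}\bigl(\phi(g-h)\bigr) = n - \dim_{\F_q}\bigl(U\cap\ker(g-h)\bigr) \ge n-k+1,
\]
which simultaneously gives injectivity (take the right-hand side positive, as $k-1<n$) and the distance bound. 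With that rewording your proof is complete and fully general.
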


As a consequence of this lemma, we have that
\begin{equation}\label{eq:general_code_from_poly}
	\mathrm{H}_{n,k,\sigma}(f_1,f_2)[\underline{\alpha}]=\left\{\left(g(\alpha_1), \dots, g(\alpha_n)\right): g\in \cH_{m,k,\sigma}(f_1,f_2) \right\},
\end{equation}
where $\underline{\alpha}=(\alpha_1,\ldots,\alpha_n)$ with $\dim_{\F_{q^m}}\langle \alpha_1,\ldots,\alpha_n\rangle_{\F_{q^m}}=n$,
gives rise to an  MRD code of $\F_{q^m}^n$ with minimum distance $d=n-k+1$.

In the following, when $\sigma$, $k$ and $\underline{\alpha}$ are clear from the context, we denote $\mathrm{H}_{n,k,\sigma}(f_1,f_2)[\underline{\alpha}]$ by $\mathrm{H}_{n}(f_1,f_2)$.

Codes $H_{n,k,1}(a,0)=\{\left(g(\alpha_1), \dots, g(\alpha_n)\right): g \in \mathcal{G}_{m,k}\}$, correspond to those presented in \cite{delsarte_bilinear_1978} and \cite{gabidulin_MRD_1985}, and are known as Delsarte-Gabidulin codes. We will denote them in the following by the symbol $\mathrm{G}_{n,k}$.
In particular, $\mathrm{G}_{n,k,\sigma}[\underline{\alpha}]$ denotes the rank-metric code obtained by evaluating each element of $\mathcal{G}_{m,k,\sigma}$ in $\underline{\alpha} \in \F_{q^m}^n$.

\medskip
\noindent For each element $\omega \in \F_{q^m}^n$ and $\tau \in \Z^+$, we define
$$B_{\tau}(\omega):= \{c \in \F_{q^m}^n \, | \,  \mathrm{rk}(\omega-c) \leq \tau\}.$$

\smallskip

Elias in \cite{Elias} and Wozencraft in \cite{Wozencraft} introduced, in the Hamming metric, the problem of list decoding a given code. Such a problem may be stated in the following very general fashion. Let $C$ be any code of lenght $n$ in the metric space $\F_{q^m}^n$, and let $\tau $ be a positive number (a radius). Given a received word, output the list of all codewords of the code within distance $\tau$ from it.   A \emph{list decoding algorithm} returns the list of all codewords with distance at most $\tau$ from any given word.

We say that $\mathrm{C}\subseteq \F_{q^m}^n$ is \emph{efficiently list decodable} at the radius $\tau$, if there exists a polynomial-time (in the length of the code, i.e. $n$) list decoding algorithm. Of course, if there exists a word $\omega \in \F_{q^m}^n \setminus \mathrm{C}$ for which $B_{\tau}(\omega)\cap \mathrm{C}$ has exponential size in $n$, such an algorithm cannot exist since writing down the list already has exponential complexity. When such an algorithm does not exist we say that $C$ is not efficiently list decodable at the radius $\tau$.
See \cite{guruswami} for further details on the list decodability issue.

It is well known that many of the codes in \eqref{eq:general_code_from_poly} can be efficiently decoded whenever up to $\left\lfloor \frac{d-1}2 \right\rfloor$ rank errors occur, where $d$ is its minimum distance. Several decoding algorithms exist for Gabidulin codes as shown by Gabidulin in \cite{gabidulin_MRD_1985}, by Richter and Plass in \cite{RP04} and by Loidreau in \cite{Loi06}. These methods were speeded up by Afanassiev, Bossert, Sidorenko and Wachter-Zeh in \cite{WAS11,WAS13,WSB10} and more recently by Puchinger and Wachter-Zeh in \cite{PWZ}, see also \cite{WT}. First Randrianarisoa and Rosenthal in \cite{RandrianarisoaRosenthal} and then, completing missing cases, Randrianarisoa in \cite{Randrianarisoa} proposed a decoding algorithm for generalized twisted Gabidulin codes, see also \cite{Li} for the additive case and for fields of characteristic two.

However, in general it is not clear whether these codes, as well as others in Class \eqref{eq:general_code_from_poly}, can be efficiently list decoded from a larger number of errors.

In \cite{wachter-zhe_2013}, by adapting to the rank metric setting techniques appeared in \cite{32,31}, the author proved that Delsarte-Gabidulin codes $\mathrm{G}_{n,k}\subset \F_{q^m}^n$ with minimum distance $d$ cannot be efficiently list decoded at any radius $\tau$ such that \begin{equation}\label{eq:boundontau} \tau \geq \frac{m+n}{2}-\sqrt{\frac{{(m+n)}^2}{2} -m(d-\epsilon)},\end{equation} where $0\leq \epsilon  < 1$.

More precisely, as a corollary of \cite[Theorem 1]{wachter-zhe_2013},  an exponential lower bound for the size of $\mathrm{G}_{n,k} \cap B_{\tau}(\omega)$ was proven, for suitable elements $\omega$ in $\mathbb{F}_{q^m}^n$, providing (\ref{eq:boundontau}) holds true.

In \cite{raviv_2016} the authors improved this result under specific restrictions for the involved parameters. As a consequence, infinite families of Delsarte-Gabidulin codes are explicitly exhibited, which are not efficiently list decodable for each $\tau$ exceeding the unique decoding radius by one. In other words codes in these latter families, cannot be list decoded efficiently at all.

In this article we prove a similar limit in list decoding behavior for all others examples in the Class described in \eqref{eq:general_code_from_poly}.

Precisely, elaborating on the techniques used in \cite{wachter-zhe_2013} and in \cite{raviv_2016}, we generalize to these latter examples, results contained in \cite{wachter-zhe_2013} and \cite{raviv_2016}. As a consequence of this, providing that $n$ divides $m$, infinite families of generalized Gabidulin codes that cannot be efficiently list decoded at all, are detected. Also, we exhibit infinite families of codes in $\mathrm{H}_{n}(f_1,f_2)$, that cannot be list decoded efficiently at any radius greater than or equal to $\tau=\left\lfloor \frac{d-1}2 \right\rfloor+2$.

Finally, relying on the properties of a set of subspace of $q$-trinomials recently presented in \cite{McGuireMueller}, we are able to prove that any rank-metric code of $\F_{q^m}^n$ of order $q^{kn}$ with $n$ dividing $m$ containing $\mathrm{G}_{n,2}$ (which implies that its Gabidulin index is at least two), and such that $4n-3$ is a square in $\mathbb{Z}$, is not efficiently list decodable for any radius greater than or equal to $\frac{2n-1-\sqrt{4n-3}}{2}$.

\section{Preliminary results}
\noindent Let $\cS$ be an $n$-subset of $\F_{q^m}$, and let $U_\cS$ be the $\F_q$-subspace of $\F_{q^m}$, seen as $m$-dimensional vector space over $\F_q$, generated by the elements of $\cS$.

In order to investigate maximum rank-metric codes in $\F_{q^m}^n$ where $n < m$ in terms of polynomials, we will need the following results concerning with  $q$-polynomials over $\F_{q^m}$.

\begin{lemma}\label{le:polynomials_matrices}
	Let $n$, $m$ be in $\Z^+$ satisfying that $n\leqslant m$, and let $q$ be a prime power. Let $\cS$ be a subset consisting of $n$ arbitrary $\F_q$-linearly independent elements $\alpha_1,\dots,\alpha_n\in \F_{q^m}$. Define
	$\theta_\cS:= \prod_{u\in U_\cS} (x-u)$.
	Then we have
	\[\mathcal{L}_{m,q}[x] / (\theta_\cS) \cong \left\{\left(f(\alpha_1), \dots,f(\alpha_n)\right): f\in \mathcal{L}_{m,q}[x] \right\} .\]
\end{lemma}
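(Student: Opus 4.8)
The plan is to realise the displayed isomorphism as the first isomorphism theorem applied to the evaluation map. Consider the $\F_q$-linear map
\[
\mathrm{ev}\colon \mathcal{L}_{m,q}[x]\to \F_{q^m}^n,\qquad f\mapsto \left(f(\alpha_1),\dots,f(\alpha_n)\right).
\]
Its image is exactly the right-hand side of the claimed isomorphism, so the whole task reduces to identifying $\ker(\mathrm{ev})$ with the left ideal $(\theta_\cS)=\mathcal{L}_{m,q}[x]\circ\theta_\cS$ generated by $\theta_\cS$ under composition. (In fact $\mathrm{ev}$ is a homomorphism of left $\tilde{\mathcal{L}}_{m,q}[x]$-modules, where an endomorphism $g$ acts on a tuple componentwise and on $\mathcal{L}_{m,q}[x]$ by left composition, since $\mathrm{ev}(g\circ f)=g\cdot \mathrm{ev}(f)$; this is why one should read $(\theta_\cS)$ as the \emph{left} ideal, and it explains the sense in which the displayed $\cong$ holds.)

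First I would record the two standard facts about the subspace polynomial $\theta_\cS=\prod_{u\in U_\cS}(x-u)$: it is a $q$-linearized polynomial, monic of $q$-degree $n=\dim_{\F_q}U_\cS$, and its set of roots in $\F_{q^m}$ is precisely $U_\cS$. Because every $f\in\mathcal{L}_{m,q}[x]$ is $\F_q$-linear, the conditions $f(\alpha_i)=0$ for all $i$ are equivalent to $f$ vanishing on $U_\cS=\langle\alpha_1,\dots,\alpha_n\rangle_{\F_q}$, so $\ker(\mathrm{ev})=\{f: U_\cS\subseteq\ker f\}$. The inclusion $(\theta_\cS)\subseteq\ker(\mathrm{ev})$ is then immediate: any $f=g\circ\theta_\cS$ satisfies $f(\alpha_i)=g(\theta_\cS(\alpha_i))=g(0)=0$.

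For the reverse inclusion I would use right division by $\theta_\cS$ in the ring of linearized polynomials. Since $\theta_\cS$ is monic of $q$-degree $n$, every $f$ can be written as $f=g\circ\theta_\cS+r$ with $\deg_\sigma r<n$, the reduction being performed by cancelling leading terms one at a time (the leading coefficient of $(c\,x^{q^a})\circ\theta_\cS$ equals $c$). If $f\in\ker(\mathrm{ev})$, then evaluating at each $\alpha_i$ and using that both $f$ and $g\circ\theta_\cS$ vanish on $U_\cS$ forces $r$ to vanish on all of $U_\cS$. But $r$ is a $q$-linearized polynomial of $q$-degree at most $n-1$, so its kernel is an $\F_q$-subspace of dimension at most $n-1$; as it contains the $n$-dimensional space $U_\cS$, we must have $r=0$, that is $f=g\circ\theta_\cS\in(\theta_\cS)$. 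Hence $\ker(\mathrm{ev})=(\theta_\cS)$, and the first isomorphism theorem yields $\mathcal{L}_{m,q}[x]/(\theta_\cS)\cong\mathrm{Im}(\mathrm{ev})$, which is the assertion.

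I expect the only genuinely delicate point to be the non-commutative division step: one must verify that right division by the monic linearized polynomial $\theta_\cS$ keeps us inside $\mathcal{L}_{m,q}[x]$ and produces a \emph{linearized} remainder of $q$-degree strictly below $n$, and that this remainder is unique (so that the quotient is faithfully represented by the $\F_{q^m}$-span of $x,x^{q},\dots,x^{q^{n-1}}$). Everything else — that $\theta_\cS$ is linearized with root set exactly $U_\cS$, and the dimension count bounding the kernel of a low-degree linearized polynomial — is classical and can simply be quoted.
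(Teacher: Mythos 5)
Your proof is correct and takes essentially the same route as the paper: both apply the first isomorphism theorem to the $\F_q$-linear evaluation map $f\mapsto\left(f(\alpha_1),\dots,f(\alpha_n)\right)$ and identify its kernel with the ideal $(\theta_\cS)$. The only difference is one of detail: the paper treats the equivalence ``$f$ vanishes on $U_\cS$ $\Leftrightarrow$ $f\equiv 0 \pmod{\theta_\cS}$'' as immediate, whereas you prove it via right division by the monic linearized polynomial $\theta_\cS$ and a root-count on the remainder, and you also make explicit (correctly) that $(\theta_\cS)$ must be read as a left ideal under composition.
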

\begin{proof}
	The map given by
	\[\varphi : f\in \mathcal{L}_{m,q}[x] \mapsto  \left(f(\alpha_1), \dots, f(\alpha_n)\right)\in \F_{q^m}^n\]
	is clearly surjective and $\F_q$-linear. By noting that $\varphi(f)$ is the zero vector if and only if $f(x)=0$ for every $x\in U_\cS$, we see that $\ker(\varphi) = \{f\, \in \mathcal{L}_{m,q}[x]: f \equiv 0 \, \mod{\theta_\cS}\}$. This concludes the proof.
\end{proof}

For the subset $\cS$ made up of $n$ arbitrary $\F_q$-linearly independent elements in $\F_{q^m}$, we define
\begin{center}
\begin{tabular}{cccc}
	$\pi_\cS$ :& $\mathcal{L}_{m,q}[x]$ &$\rightarrow$ &$\mathcal{L}_{m,q}[x]/(\theta_\cS)$,\vspace{0.2cm}\\
		 & $f$            &$\mapsto$     & $f \, \mod{\theta_\cS}$.
\end{tabular}
\end{center}
In particular, as already observed, when $U_\cS=\F_{q^m}$, by Lemma \ref{le:polynomials_matrices} we have
\[ \End_{\F_q}(\F_{q^m}) \cong  \mathcal{L}_{m,q}[x]/(x^{q^m}-x).\]

\begin{lemma}\label{le:f=0}
Let $\cS$ be an $n$-subset of \,$\F_q$-linearly independent elements in $\F_{q^m}$. Let $\mathcal{C}$ be a subset of $\mathcal{L}_{m,q}[x]$. Assume that for any distinct $f$ and $g\in \mathcal{C}$, the number of solutions of $f=g$ in $U_\cS$ is strictly smaller than $q^n$. Then $\pi_\cS$ is injective on $\mathcal{C}$.
\end{lemma}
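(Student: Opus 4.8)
The plan is to deduce the injectivity of $\pi_\cS$ on $\mathcal{C}$ directly from the explicit description of the kernel obtained in the proof of Lemma~\ref{le:polynomials_matrices}, translating a coincidence $\pi_\cS(f)=\pi_\cS(g)$ into a statement about how many elements of $U_\cS$ solve $f=g$. The whole argument is essentially a counting observation built on top of that kernel computation.

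First I would recall the characterization underlying Lemma~\ref{le:polynomials_matrices}: for $h\in\mathcal{L}_{m,q}[x]$ one has $h\equiv 0\,\mod{\theta_\cS}$ if and only if $h(x)=0$ for every $x\in U_\cS$, since $\varphi(h)$ is the zero vector exactly when $h$ vanishes on all of $U_\cS$. Applying this to $h=f-g$, which again lies in $\mathcal{L}_{m,q}[x]$ because this set is closed under difference, yields the equivalence
\[
\pi_\cS(f)=\pi_\cS(g)\iff (f-g)\equiv 0\,\mod{\theta_\cS}\iff (f-g)(x)=0\ \text{for all }x\in U_\cS .
\]
Thus $\pi_\cS$ identifies $f$ and $g$ precisely when every element of $U_\cS$ is a solution of the equation $f=g$.

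Next I would take two distinct $f,g\in\mathcal{C}$ and argue by contradiction. If $\pi_\cS(f)=\pi_\cS(g)$, then by the equivalence above $f-g$ vanishes on the whole of $U_\cS$; since $\cS$ consists of $n$ elements that are $\F_q$-linearly independent, we have $\dim_{\F_q}U_\cS=n$ and hence $|U_\cS|=q^n$. Consequently the equation $f=g$ would admit exactly $q^n$ solutions in $U_\cS$, contradicting the hypothesis that the number of such solutions is strictly smaller than $q^n$. Therefore $\pi_\cS(f)\neq\pi_\cS(g)$ for all distinct $f,g\in\mathcal{C}$, which is the asserted injectivity.

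I do not anticipate a genuine obstacle: the substantive work has already been done in Lemma~\ref{le:polynomials_matrices}, and what remains is the elementary bookkeeping identity "$f-g$ vanishes identically on $U_\cS$" $\Leftrightarrow$ "$f=g$ has all $q^n$ elements of $U_\cS$ as solutions." The only point deserving a moment of care is making sure the kernel description is legitimately applied to the $q$-polynomial $f-g$ rather than to $f$ and $g$ separately, and then matching the threshold $q^n$ appearing in the hypothesis with the cardinality $|U_\cS|$ of the space on which full vanishing is detected.
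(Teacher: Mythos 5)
Your proof is correct and takes essentially the same route as the paper's: both reduce injectivity to the characterization that $h\equiv 0 \pmod{\theta_\cS}$ if and only if $h$ vanishes on all of $U_\cS$ (applied to $h=f-g$), and then invoke the hypothesis that $f=g$ has strictly fewer than $q^n=|U_\cS|$ solutions in $U_\cS$. The paper states this in a single line; you have simply written out the same counting argument in full detail.
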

\begin{proof}
It follows directly from the assumption $|\{x\in \F_{q^m}: f(x)=0 \}|<q^n=|U_\cS|$ and the fact that $f\equiv 0 \,\, \mod \, { \theta_\cS}$ if and only if $f(u)=0$ for every $u\in U_\cS$.
\end{proof}

By Lemma \ref{le:f=0} and the fact that  $\cG_{n,k,\sigma}$ is an MRD code, the following result can readily be verified.
\begin{corollary}\label{coro:representation_all}
Let $\cS$ be an $n$-subset of $\F_q$-linearly independent elements in $\F_{q^m}$. Let $\sigma$ be a generator of $\mathrm{Gal}(\F_{q^m}\colon\F_q)$. Then the set
\begin{equation}\label{tran}
\mathrm{Tran}=\{a_0 x + a_1 x^{\sigma} + \dots +a_{n-1} x^{\sigma^{^{n-1}}}: a_i\in \F_{q^m} \}
\end{equation}
is a transversal, namely a system of distinct representatives, for the ideal $(\theta_\cS)$ in  $\mathcal{L}_{m,q}[x]$.
\end{corollary}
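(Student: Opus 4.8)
The plan is to prove that the restriction of $\pi_\cS$ to $\mathrm{Tran}$ is a bijection onto $\mathcal{L}_{m,q}[x]/(\theta_\cS)$; since a transversal is precisely a set of representatives meeting each coset of $(\theta_\cS)$ in exactly one element, this is equivalent to the assertion. I would split the argument into two steps: first show that $\pi_\cS$ is injective on $\mathrm{Tran}$, and then verify that $|\mathrm{Tran}|$ coincides with the number of cosets, so that injectivity automatically forces surjectivity. Note first that $\mathrm{Tran}\subseteq \mathcal{L}_{m,q}[x]$, since $\sigma=q^s$ makes each $x^{\sigma^i}$ an honest $q$-monomial, so the statement is well posed.

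For injectivity I would take distinct $f,g\in \mathrm{Tran}$ and put $h=f-g$. By construction $h$ is a \emph{nonzero} $\sigma$-linearized polynomial of $\sigma$-degree at most $n-1$; equivalently, $h$ is a nonzero codeword of the generalized Gabidulin code $\cG_{m,n,\sigma}$. Because $\cG_{m,n,\sigma}$ is MRD with minimum distance $m-n+1$, the endomorphism of $\F_{q^m}$ attached to $h$ has rank at least $m-n+1$, whence $\dim_{\F_q}\ker(h)\le n-1$. The set of solutions of $f=g$ in $\F_{q^m}$ is exactly $\ker(h)$, so it has at most $q^{\,n-1}<q^{n}=|U_\cS|$ elements; in particular the number of solutions of $f=g$ inside $U_\cS$ is strictly smaller than $q^{n}$. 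Applying Lemma~\ref{le:f=0} with $\mathcal{C}=\mathrm{Tran}$ then gives that $\pi_\cS$ is injective on $\mathrm{Tran}$.

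For the count, $\mathrm{Tran}$ is parametrized by the $n$ free coefficients $a_0,\dots,a_{n-1}\in \F_{q^m}$, so $|\mathrm{Tran}|=q^{mn}$. On the other hand Lemma~\ref{le:polynomials_matrices} identifies $\mathcal{L}_{m,q}[x]/(\theta_\cS)$ with $\{(f(\alpha_1),\dots,f(\alpha_n)):f\in \mathcal{L}_{m,q}[x]\}$, and since $\alpha_1,\dots,\alpha_n$ are $\F_q$-linearly independent one may extend them to an $\F_q$-basis of $\F_{q^m}$ and, for any prescribed $(v_1,\dots,v_n)\in \F_{q^m}^n$, define an $\F_q$-linear endomorphism sending $\alpha_i\mapsto v_i$ and killing the complementary basis vectors; hence that image is all of $\F_{q^m}^n$ and the quotient has exactly $q^{mn}$ elements. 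An injection between finite sets of equal cardinality is a bijection, so $\pi_\cS|_{\mathrm{Tran}}$ is a bijection onto $\mathcal{L}_{m,q}[x]/(\theta_\cS)$, which is exactly the statement that $\mathrm{Tran}$ is a transversal for $(\theta_\cS)$. The only substantive ingredient is the kernel estimate $\dim_{\F_q}\ker(h)\le n-1$ for a nonzero $h$ of $\sigma$-degree at most $n-1$, which is precisely the MRD property of $\cG_{m,n,\sigma}$ and is the step I expect to carry the whole argument; the remaining points (identifying $\mathrm{Tran}$ with $\cG_{m,n,\sigma}$, the extension-of-basis argument for surjectivity, and the equal-cardinality conclusion) are routine bookkeeping.
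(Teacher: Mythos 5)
Your proof is correct and takes essentially the same approach as the paper: the paper's own justification is precisely the combination of Lemma~\ref{le:f=0} with the MRD property of the generalized Gabidulin code $\cG_{m,n,\sigma}$ for injectivity, together with the cardinality count $|\mathrm{Tran}|=q^{mn}=|\mathcal{L}_{m,q}[x]/(\theta_\cS)|$ coming from Lemma~\ref{le:polynomials_matrices}. Your write-up simply makes explicit the steps the paper leaves as ``readily verified.''
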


This implies that if $f(x)\in \mathcal{L}_{m,q}[x]$ is a $\sigma$-polynomial with $\sigma$-degree less or equal to $n-1$, then
\[ f(x) \equiv 0 \pmod{\theta_\cS} \Leftrightarrow f(x)= 0, \]
and if $f(x)$ and $g(x)$ are two $\sigma$-polynomials in $\mathrm{Tran}$ then
\[f(x) \equiv g(x) \pmod{\theta_\cS} \Leftrightarrow f(x)=g(x). \]


Clearly, if an $\F_q$-linear subset $\mathcal{C}\subseteq \mathcal{L}_{m,q}[x]$ is  of size $q^{nk}$ and each nonzero polynomial in $\mathcal{C}$ has at most $q^{k-1}$ roots over $U_\cS$, for instance $\mathcal{C}=\cH_{m,k,{\sigma}}(f_1,f_2)$, then the assumption on $\mathcal{C}$ in Lemma \ref{le:f=0} is satisfied.

By Lemmas \ref{le:polynomials_matrices} and \ref{le:f=0}, it follows that codes described in \eqref{eq:general_code_from_poly} can be equivalently written as
\begin{small}
\begin{equation}\label{eq:polynomials_GTGC}
		\pi_\cS(\cH_{m,k,{\sigma}}(f_1,f_2)) = \{f \, \mod {\theta_\cS}: f\in \cH_n(f_1,f_2)\}\subseteq \mathcal{L}_{m,q}[x]/(\theta_\cS).
\end{equation}
\end{small}
In particular, when $n=m$, it becomes
$$\{f \, \mod (x^{q^m}-x): f\in \cH_{k,m,{\sigma}}(f_1,f_2)\}\subseteq \tilde{\mathcal{L}}_{m,q}[x].$$

\section{Bounds on list decodability of MRD codes in $\mathrm{H}_{n}(f_1,f_2)$}\label{sec:bounds}

Let $\sigma$ be a generator of $\mathrm{Gal}(\F_{q^m}\colon\F_q)$. In what follows the concept of {\it $\sigma$-subspace polynomial} will be of some importance. A $\sigma$-\emph{subspace polynomial} with respect to $\F_{q^m}$, is a {\it monic} linearized polynomial, say $s(x)$, satisfying the property that, if $r=\deg_{\sigma} s$, there exists an $r$-dimensional subspace $U$ of $\F_{q^m},$ seen as a vector space over $\F_q$, such that
\begin{small}
$$s(x)=(-1)^{r+1}\frac{1}{\left| \begin{pmatrix}
u_1 & u_1^\sigma & \cdots & u_1^{\sigma^{r-1}} \\
\vdots \\
u_r & u_r^\sigma & \cdots & u_r^{\sigma^{r-1}}
\end{pmatrix} \right|}  \begin{pmatrix}
x & x^\sigma & \cdots & x^{\sigma^r} \\
u_1 & u_1^\sigma & \cdots & u_1^{\sigma^r} \\
\vdots \\
u_r & u_r^\sigma & \cdots & u_r^{\sigma^r}
\end{pmatrix},$$
\end{small}
where $u_1,\ldots,u_r$ is an $\F_q$-basis of $U$.

\noindent Clearly, $s(x)=a_0x+a_1x^{\sigma}+\cdots+a_{r-1}x^{{\sigma}^{r-1}}+x^{{\sigma}^r}$ for some $a_0,a_1,...,a_{r-1} \in \F_{q^m}.$ Also, we have that each subspace of $\F_{q^m}$ corresponds to a unique $\sigma$-subspace polynomial.

Denote by $\mathcal{P}_{r,\sigma} \subset  \mathcal{L}_{m,q}[x]$ the set of all $\sigma$-subspace polynomials in $ \mathcal{L}_{m,q}[x]$ associated with $r$-dimensional subspaces of $\F_q^m$. Clearly, we have \[|\mathcal{P}_{r,\sigma}|={m\brack r}_q.\]
Let $\cS$ be an $n$-subset of $\F_q$-linearly independent elements in $\F_{q^m}$ and suppose that $r < n$. As a consequence of Corollary \ref{coro:representation_all} and since $r<n$, it is also plain that $|\pi_\cS(\mathcal{P}_{r,\sigma})|={m\brack r}_q$.

Arguing as in \cite[Theorem 1]{raviv_2016}, we may now show the existence of a {\it large set} of $\sigma$-subspace polynomials in $\mathcal{L}_{m,q}[x]$ agreeing on their top-most $\sigma$-coefficients, whose kernels are contained in a fixed $n$-dimensional subspace. More precisely,


\begin{lemma}\label{lemma:subspaceset}
Let $g,r,n$ and $m \in \mathbb{Z}^+$ be positive integers such that $g \leq r < n \leq m$.
Let $\cS$ be an $n$-subset of $\F_{q^m}$ and let denote by $\tilde{\mathcal{P}}_{r,\sigma}$ the subset of $\mathcal{P}_{r,\sigma}$ whose polynomials have kernel contained in $U_{\cS}$.
There exists a subset $\mathcal{F} \subset \tilde{\mathcal{P}}_{r,\sigma}$ of $\sigma$-subspace polynomials whose elements have $\sigma$-degree $r$, and agree on the last $g$ $\sigma$-coefficients, such that
\[ |\mathcal{F} | \geq \frac{{n \brack r}_q}{q^{m(g-1)}}. \]
\end{lemma}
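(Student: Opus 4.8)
The plan is to count the $\sigma$-subspace polynomials in $\tilde{\mathcal{P}}_{r,\sigma}$ and then invoke a pigeonhole argument on the tuple of their last $g$ $\sigma$-coefficients. First I would observe that, by the one-to-one correspondence between $r$-dimensional $\F_q$-subspaces and $\sigma$-subspace polynomials established just before the statement, the elements of $\tilde{\mathcal{P}}_{r,\sigma}$ are in bijection with the $r$-dimensional subspaces $U$ of $\F_{q^m}$ whose associated subspace polynomial has kernel $U$ contained in $U_\cS$. Since the kernel of the subspace polynomial attached to $U$ is exactly $U$ itself, having kernel contained in $U_\cS$ simply means $U \subseteq U_\cS$. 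As $U_\cS$ is an $n$-dimensional $\F_q$-space (because $\cS$ consists of $n$ linearly independent elements), the number of such $U$ is precisely the Gaussian binomial ${n \brack r}_q$, so $|\tilde{\mathcal{P}}_{r,\sigma}| = {n \brack r}_q$.

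Next I would analyze the map that records the last $g$ $\sigma$-coefficients. Writing a polynomial in $\tilde{\mathcal{P}}_{r,\sigma}$ as $s(x)=a_0x+a_1x^{\sigma}+\cdots+a_{r-1}x^{\sigma^{r-1}}+x^{\sigma^r}$, the leading coefficient is always $1$, so the ``top-most'' or ``last'' $g$ $\sigma$-coefficients that genuinely vary are $a_{r-g+1},\ldots,a_{r-1}$, i.e.\ the $g-1$ coefficients sitting just below the forced monic term. I would define the tuple-valued map
\[
\Phi : \tilde{\mathcal{P}}_{r,\sigma} \longrightarrow \F_{q^m}^{\,g-1}, \qquad s \longmapsto (a_{r-1},a_{r-2},\ldots,a_{r-g+1}).
\]
Its codomain has cardinality $(q^m)^{g-1}=q^{m(g-1)}$. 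By the pigeonhole principle, some fiber $\mathcal{F}:=\Phi^{-1}(\mathbf{c})$ satisfies
\[
|\mathcal{F}| \;\geq\; \frac{|\tilde{\mathcal{P}}_{r,\sigma}|}{q^{m(g-1)}} \;=\; \frac{{n \brack r}_q}{q^{m(g-1)}},
\]
and every element of $\mathcal{F}$ is a genuine $\sigma$-subspace polynomial of $\sigma$-degree $r$ (the leading term $x^{\sigma^r}$ is never cancelled) with kernel inside $U_\cS$, agreeing on the specified $g$ top-most $\sigma$-coefficients. This yields the claimed bound.

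The one point requiring care — and the main thing to pin down — is the bookkeeping of which coefficients count among the ``last $g$''. The leading coefficient is normalized to $1$ by monicity, so it is automatically shared by all candidates; hence fixing $g$ top coefficients imposes only $g-1$ genuine constraints, each ranging over $\F_{q^m}$, which is exactly what produces the divisor $q^{m(g-1)}$ rather than $q^{mg}$. I would make sure the statement's phrase ``agree on the last $g$ $\sigma$-coefficients'' is read as including the (trivially common) leading one, so that the counting matches the stated denominator; this is the only place where an off-by-one slip could arise, and it is precisely the source of the exponent $g-1$. Everything else is the determinant/Gaussian-binomial identification from the preceding paragraph together with the elementary averaging bound, so there is no substantive analytic obstacle.
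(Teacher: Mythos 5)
Your proposal is correct and follows essentially the same route as the paper's own proof: count $|\tilde{\mathcal{P}}_{r,\sigma}|={n \brack r}_q$ via the subspace--polynomial bijection, partition by the top $\sigma$-coefficients into $q^{m(g-1)}$ classes, and apply the pigeonhole principle. Your explicit remark that monicity forces the leading coefficient to $1$, so that only $g-1$ coefficients genuinely vary, is exactly the (implicit) reason the paper's partition has $q^{m(g-1)}$ rather than $q^{mg}$ parts, so the counting matches.
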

\begin{proof}
Clearly, $|\tilde{\mathcal{P}}_{r,\sigma}|={n \brack r}_q$.
We can partition $\tilde{\mathcal{P}}_{r,\sigma}$ into $q^{m(g-1)}$ subsets according to their last $g$ $\sigma$-coefficients. Then, by applying the pigeonhole principle, there exists $\mathcal{F} \subseteq \tilde{\mathcal{P}}_{r,\sigma}$ as in the assertion.
\end{proof}

In particular, when $n \mid m$, we can take as $U_{\cS}$ the subfield $\F_{q^n}$ of $\F_{q^m}$. In this case we can explicitly exhibit a set of $\sigma$-subspace polynomials agreeing on their top-most $\sigma$-coefficients with exponential size in the value $n$. Toward this aim we briefest the following

\begin{lemma}\label{lemma:numrootsTrace}
Let $t,n$ and $m \in \mathbb{Z}^+$ be positive integers such that $t \mid n$ and $n \mid m$.
Consider the $\sigma$-polynomial
\[f(x)=\sum_{i=0}^{\frac{n}t-1} \beta^{\sigma^{it}} x^{\sigma^{it}} \in \mathcal{L}_{m,\sigma}[X]\]
with $\beta \in \F_{q^n}^*$.
The number of roots of $f$ over $\F_{q^m}$ is $q^{n-t}$.
\end{lemma}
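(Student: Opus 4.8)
The plan is to reduce the root count of $f$ to a relative trace computation. First I would absorb the coefficients by the substitution $z=\beta x$: since $\sigma^{it}$ is a field automorphism, $\beta^{\sigma^{it}}x^{\sigma^{it}}=(\beta x)^{\sigma^{it}}$, so that $f(x)=T(\beta x)$ with $T(z):=\sum_{i=0}^{n/t-1} z^{\sigma^{it}}$. As $\beta\neq 0$, the map $x\mapsto \beta x$ is a bijection of $\F_{q^m}$, hence $f$ and $T$ have the same number of roots over $\F_{q^m}$; note that $T$ itself no longer involves $\beta$, so only $\beta\neq 0$ is actually used. It now suffices to show $|\ker T|=q^{n-t}$, where $T$ is viewed as the $\F_q$-linear endomorphism of $\F_{q^m}$ it induces (its roots forming an $\F_q$-subspace).

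The key step is to localise all roots of $T$ inside $\F_{q^n}$. Writing $N=n/t$ and regarding $\mathrm{id},\sigma^{t},\dots,\sigma^{t(N-1)}$ as commuting $\F_q$-linear operators, a telescoping identity gives $(\sigma^{t}-\mathrm{id})\circ T=\sigma^{tN}-\mathrm{id}=\sigma^{n}-\mathrm{id}$. Consequently $\ker T\subseteq \ker(\sigma^{n}-\mathrm{id})$, the latter being the fixed field of $\sigma^{n}$. Since $\sigma\colon x\mapsto x^{q^s}$ with $\gcd(s,m)=1$ and $n\mid m$, one checks $\gcd(sn,m)=\gcd(n,m)=n$, whence $\ker(\sigma^{n}-\mathrm{id})=\F_{q^n}$. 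Thus every root of $T$ lies in $\F_{q^n}$, and I expect this to be the crux of the argument: the naive $q$-degree bound only yields at most $q^{s(n-t)}$ roots, so pinning the count down exactly genuinely requires this localisation (or an equivalent rank argument), not a degree estimate.

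It then remains to identify $T$ restricted to $\F_{q^n}$ with a relative trace. Because $\gcd(s,m)=1$ together with $n\mid m$ forces $\gcd(s,n)=1$, the restriction $\sigma|_{\F_{q^n}}$ has order $n$ and hence generates $\mathrm{Gal}(\F_{q^n}\colon\F_q)$; therefore $\sigma^{t}|_{\F_{q^n}}$ generates $\mathrm{Gal}(\F_{q^n}\colon\F_{q^t})$, whose $N=n/t$ elements are precisely $\sigma^{0},\sigma^{t},\dots,\sigma^{t(N-1)}$ restricted to $\F_{q^n}$. Summing over this Galois group shows $T|_{\F_{q^n}}=\mathrm{Tr}_{\F_{q^n}/\F_{q^t}}$, which is a surjective $\F_q$-linear map onto $\F_{q^t}$, so its kernel has $\F_q$-dimension $n-t$. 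Combining with the previous paragraph, the full root set of $T$ over $\F_{q^m}$ coincides with $\ker\big(\mathrm{Tr}_{\F_{q^n}/\F_{q^t}}\big)$, which has exactly $q^{n-t}$ elements; the same count then holds for $f$, as desired.
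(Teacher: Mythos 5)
Your proof is correct, and it differs from the paper's own argument in one substantive way. Both proofs hinge on the same central computation: restricted to $\F_{q^n}$, the polynomial is (up to your change of variable) the relative trace $\mathrm{Tr}_{\F_{q^n}/\F_{q^t}}$, an $\F_{q^t}$-linear surjection onto $\F_{q^t}$, whose kernel therefore has exactly $q^{n-t}$ elements. The paper phrases this by keeping $\beta$ in place and showing $F(x_0)^{\sigma^t}=F(x_0)$ for the restriction $F$ of $f$ to $\F_{q^n}$, so that $\mathrm{Im}\,F=\F_{q^t}$ and rank--nullity over $\F_{q^t}$ gives the kernel size; you first absorb $\beta$ via $z=\beta x$ and then identify the set $\{\sigma^{it}|_{\F_{q^n}}: 0\leq i\leq n/t-1\}$ with $\mathrm{Gal}(\F_{q^n}\colon\F_{q^t})$ — equivalent bookkeeping. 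The real divergence is the passage from roots in $\F_{q^n}$ to roots in $\F_{q^m}$: the paper invokes the external rank bound of Gow and Quinlan (\cite[Theorem 5]{GQ2009}), which caps the number of roots of a $\sigma$-polynomial of $\sigma$-degree $n-t$ at $q^{n-t}$, and then concludes from the inclusion $\F_{q^n}\subseteq\F_{q^m}$; you instead prove the stronger localisation $\ker T\subseteq\F_{q^n}$ from scratch, via the telescoping identity $(\sigma^t-\mathrm{id})\circ T=\sigma^n-\mathrm{id}$ together with the fixed-field computation $\gcd(sn,m)=\gcd(n,m)=n$. Your remark that a naive degree estimate (at most $q^{s(n-t)}$ roots) would not suffice is exactly right, and is precisely why the paper needs the Gow--Quinlan theorem rather than a degree count. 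What each route buys: the paper's is shorter given the citation and uses the tool it cites again later (in Corollary \ref{coro:trin}); yours is self-contained, and the statement it establishes — every root of $T$ over $\F_{q^m}$, indeed over the algebraic closure, already lies in $\F_{q^n}$ — is slightly stronger than the counting inequality it replaces.
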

\begin{proof}
Since $f(x)$ has coefficients in $\F_{q^n}$, we may look to the $\F_q$-linear transformation $F\colon \F_{q^{n}}\rightarrow\F_{q^n}$ defined by $F(x)=f(x)$.
Clearly, $F$ is also $\F_{q^t}$-linear (because of the $\sigma$-powers that appear in the expression of $f$), $\dim_{\F_{q^t}} \mathrm{Im}\,F\geq 1$ and $\ker\,F$ corresponds to the roots of $f$ over $\F_{q^n}$.
Note that, if $x_0 \in \F_{q^n}$ then
\[ F(x_0)^{\sigma^t}=\left(\sum_{i=0}^{\frac{n}t-1} \beta^{\sigma^{it}} x_0^{\sigma^{it}}\right)^{\sigma^t}=\sum_{i=0}^{\frac{n}t-1} \beta^{\sigma^{(i+1)t}} x_0^{\sigma^{(i+1)t}}= \]
\[ =\sum_{j=1}^{\frac{n}t} \beta^{\sigma^{jt}} x_0^{\sigma^{jt}}=\beta x_0+\sum_{j=1}^{\frac{n}t-1} \beta^{\sigma^{jt}} x_0^{\sigma^{jt}}=F(x_0), \]
and hence $\mathrm{Im}\,F = \F_{q^t}$.
It follows that the number of roots of $f$ over $\F_{q^n}$ is equal to $q^{n-t}$, since $\dim_{\F_{q^t}} \mathrm{Im} \,F=1$.
Moreover, since the $\sigma$-degree of $f$ is $n-t$, by \cite[Theorem 5]{GQ2009} the number of roots of $f$ over $\F_{q^m}$ is at most $q^{n-t}$.
Then, since $\F_{q^n}\subseteq \F_{q^m}$, the assertion follows.
\end{proof}

The following construction extends \cite[Construction 2]{raviv_2016}.

\begin{proposition}\label{lemma:trace}
Let $t,n$ and $m \in \mathbb{Z}^+$ be positive integers such that $t \mid n$ and $n \mid m$.
The set
\[ \mathcal{T}=\left\{ f_{\beta}:=\sum_{i=0}^{\frac{n}t-1} \beta^{\sigma^{it}-\sigma^{n-t}} x^{\sigma^{it}} \colon \beta \in \F_{q^n}^* \right\}\subset \mathcal{L}_{m,\sigma}[x] \]
is a set of $\sigma$-subspace polynomials whose elements have $\sigma$-degree $n-t$, agree on their last $t$ $\sigma$-coefficients and
\[ |\mathcal{T}| = \frac{q^{n}-1}{q^t-1}. \]
\end{proposition}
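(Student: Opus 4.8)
The plan is to reduce the whole statement to Lemma~\ref{lemma:numrootsTrace} by noticing that each $f_\beta$ is a scalar multiple of the polynomial studied there. First I would factor out the top coefficient and write
\[ f_\beta(x)=\beta^{-\sigma^{n-t}}\sum_{i=0}^{\frac{n}{t}-1}\beta^{\sigma^{it}}x^{\sigma^{it}}, \]
so that $f_\beta=\beta^{-\sigma^{n-t}}f$, where $f$ is exactly the polynomial appearing in Lemma~\ref{lemma:numrootsTrace} (with the same $t$ and $\beta$). Multiplying by the nonzero constant $\beta^{-\sigma^{n-t}}$ leaves the set of roots unchanged, so $f_\beta$ and $f$ share the same kernel $U_\beta$; by Lemma~\ref{lemma:numrootsTrace} this kernel has $q^{n-t}$ elements, all lying in $\F_{q^n}\subseteq \F_{q^m}$, and hence is an $(n-t)$-dimensional $\F_q$-subspace.

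Next I would check that $f_\beta$ is monic of $\sigma$-degree exactly $n-t$: the index $i=\frac{n}{t}-1$ produces the term $x^{\sigma^{n-t}}$ with coefficient $\beta^{\sigma^{n-t}-\sigma^{n-t}}=1$, and no larger $\sigma$-power occurs. Thus $f_\beta$ is a monic $\sigma$-linearized polynomial of $\sigma$-degree $r:=n-t$ whose root space $U_\beta$ has dimension $r$. To upgrade this to the statement that $f_\beta$ is the $\sigma$-subspace polynomial of $U_\beta$, I would compare it with the $\sigma$-subspace polynomial $s_{U_\beta}$ of $U_\beta$: both are monic of $\sigma$-degree $r$ with root set $U_\beta$, so $f_\beta-s_{U_\beta}$ is $\sigma$-linearized of $\sigma$-degree at most $r-1$ but vanishes on all $q^{r}$ points of $U_\beta$; since a nonzero such polynomial has at most $q^{r-1}<q^{r}$ roots over $\F_{q^m}$, it must be zero. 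This proves that every $f_\beta$ is a $\sigma$-subspace polynomial of $\sigma$-degree $n-t$.

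The agreement on the top $t$ coefficients is then read off the shape of $f_\beta$, whose only nonzero $\sigma$-coefficients occur at the exponents $\sigma^0,\sigma^{t},\dots,\sigma^{n-t}$. The coefficient of $x^{\sigma^{n-t}}$ equals $1$, while the coefficients of $x^{\sigma^{n-t-1}},\dots,x^{\sigma^{n-2t+1}}$ all vanish (these indices are not multiples of $t$), so the last $t$ $\sigma$-coefficients equal $(1,0,\dots,0)$ for every choice of $\beta$.

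Finally, for the cardinality I would compute the fibres of the surjection $\beta\mapsto f_\beta$. Since polynomials coincide precisely when all coefficients agree, comparing the coefficient of $x$ (the term $i=0$) shows that $f_\beta=f_{\beta'}$ forces $(\beta/\beta')^{1-\sigma^{n-t}}=1$, i.e. $\gamma:=\beta/\beta'$ is fixed by $\sigma^{n-t}$. Because $\sigma$ generates $\mathrm{Gal}(\F_{q^m}\colon\F_q)$ and $n\mid m$, its restriction generates $\mathrm{Gal}(\F_{q^n}\colon\F_q)$, and the fixed field of $\sigma^{n-t}$ inside $\F_{q^n}$ is exactly $\Fqt$ (here $t\mid n$ is used); hence the condition reads $\gamma\in\Fqt^{*}$. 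Conversely, if $\beta/\beta'\in\Fqt^{*}$ then, as $\sigma^{t}$ fixes $\Fqt$, every exponent $\sigma^{it}-\sigma^{n-t}$ annihilates $\gamma$ and all coefficients of $f_\beta$ and $f_{\beta'}$ agree. The fibres of $\beta\mapsto f_\beta$ are therefore precisely the cosets of $\Fqt^{*}$ in $\F_{q^n}^{*}$, giving
\[ |\cT|=\frac{|\F_{q^n}^{*}|}{|\Fqt^{*}|}=\frac{q^{n}-1}{q^{t}-1}. \]
I expect the two subtle points to be the uniqueness argument of the second paragraph, which is what genuinely promotes $f_\beta$ from ``a polynomial with the correct kernel'' to ``the $\sigma$-subspace polynomial of that kernel'', and the Galois-theoretic identification of the fixed field of $\sigma^{n-t}$ with $\Fqt$ in the last paragraph.
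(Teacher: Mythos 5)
Your proof is correct and follows essentially the same route as the paper: both reduce the kernel computation to Lemma \ref{lemma:numrootsTrace} by factoring out the scalar $\beta^{-\sigma^{n-t}}$, and both count $|\mathcal{T}|$ via the fact that $f_\beta=f_{\beta'}$ exactly when $\beta/\beta'\in\F_{q^t}^*$ (your fixed-field-of-$\sigma^{n-t}$ condition is equivalent to the paper's condition $(\beta/\beta')^{\sigma^t-1}=1$). The only difference is that you spell out details the paper leaves implicit, notably the uniqueness argument promoting $f_\beta$ from ``monic with $(n-t)$-dimensional kernel'' to ``the $\sigma$-subspace polynomial of that kernel,'' and you replace the paper's appeal to Corollary \ref{coro:representation_all} by a direct comparison of coefficients, which is equally valid.
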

\begin{proof}
Since for each non zero $\beta \in \F_{q^n}$, the polynomial $f_\beta$ may be viewed as a linearized polynomial of the form discussed in previous Lemma \ref{lemma:numrootsTrace} (up to multiplying by a suitable power of $\beta$), we have that $\dim_{\F_q} \ker\,f_{\beta}=n-t$ and because of the definition of $f_{\beta}$ it is also monic.
The second part follows by applying Corollary \ref{coro:representation_all}, and from the fact that $f_{\alpha}=f_{\beta}$ for some $\alpha,\beta \in \F_{q^n}^*$ if and only if $(\alpha/\beta)^{\sigma^t-1}=1$.
\end{proof}

\begin{remark}\label{rem:n-ker}
If $P \in \mathcal{L}_{m,q}[x]$ and $\mathcal{S}=\{\alpha_1,\ldots,\alpha_n\}$ is an $n$-subset of $\F_{q^m}$ then, denoting by $c_P=(P(\alpha_1),\ldots,P(\alpha_n))$ we have that
\[ \mathrm{rk} (c_P)=n-\dim_{\F_q} ((\ker\,P) \cap U_{\mathcal{S}}), \]
viewing $P$ as an $\F_q$-linear application from $U_{\mathcal{S}}=\langle \alpha_1,\ldots,\alpha_n\rangle_{\F_q}$ to $\F_{q^m}$.
If $\ker\,P \subseteq U_{\mathcal{S}}$, it follows that
\[ \mathrm{rk} (c_P)=n-\dim_{\F_q} \ker\,P. \]
\end{remark}

We can now state a slightly more general version of \cite[Theorem 1]{wachter-zhe_2013}, which in fact may be applied to generalized Gabidulin codes.
The relevan result may be derived from Lemma \ref{lemma:subspaceset} and arguing as in the proof of Theorem 1 of \cite{wachter-zhe_2013}.

\begin{theorem}\label{thm:listdecGab}
Let $k, n$ and $m \in \mathbb{Z}^+$ such that $k \leq n \leq m$. Let $\mathrm{G}_{n,k,\sigma}$ be a generalized Gabidulin code with minimum distance $d=n-k+1$. Let $\tau$ be an integer such that $\left\lfloor\frac{d-1}{2}\right\rfloor+1 \leq \tau \leq d-1$. Then, there exists a word $c \in \F_{q^m}^n \setminus \mathrm{G}_{n,k,\sigma}$ such that
\[|\mathrm{G}_{n,k,\sigma} \cap B_{\tau}(c)| \geq \frac{{n \brack n-\tau}_q}{q^{m(n-\tau-k)}}.\]
\end{theorem}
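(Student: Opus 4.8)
The plan is to mimic the proof of \cite[Theorem 1]{wachter-zhe_2013}, but to feed into it the explicit family of subspace polynomials produced by Lemma \ref{lemma:subspaceset}. The guiding idea is that a $\sigma$-subspace polynomial of $\sigma$-degree $r$ whose kernel lies in $U_\cS$ evaluates, via $\pi_\cS$, to a word of rank exactly $n-r$ (Remark \ref{rem:n-ker}); choosing $r=n-\tau$ therefore yields words of rank exactly $\tau$. By forcing many such polynomials to share their top $\sigma$-coefficients, their differences from a common high-degree part drop to $\sigma$-degree $\le k-1$ and hence become genuine codewords of $\mathrm{G}_{n,k,\sigma}$, all clustered around a single center at distance $\tau$.

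First I would set $r=n-\tau$ and check that the hypothesis $\left\lfloor\frac{d-1}{2}\right\rfloor+1\le\tau\le d-1=n-k$ forces $k\le r<n$, so that Lemma \ref{lemma:subspaceset} applies with the choice $g=r-k+1$. With $U_\cS=\langle\alpha_1,\dots,\alpha_n\rangle_{\F_q}$, let $\cF\subseteq\tilde{\mathcal P}_{r,\sigma}$ be the resulting family of monic $\sigma$-subspace polynomials of $\sigma$-degree $r$, all with kernel in $U_\cS$, agreeing on their top $g$ coefficients (those of $x^{\sigma^k},\dots,x^{\sigma^r}$), and satisfying
\[|\cF|\ge\frac{{n \brack r}_q}{q^{m(g-1)}}=\frac{{n \brack n-\tau}_q}{q^{m(n-\tau-k)}}.\]
Denote by $H(x)=\sum_{j=k}^{r}a_jx^{\sigma^j}$ the common high-degree part (with $a_r=1$), so that every $s\in\cF$ splits as $s=H+L_s$ with $\deg_\sigma L_s\le k-1$; thus each $\pi_\cS(L_s)$ is a codeword of $\mathrm{G}_{n,k,\sigma}$.

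I would then take the center to be $c=-\pi_\cS(H)$. Since $H$ has $\sigma$-degree $r$ with $k\le r<n$, Corollary \ref{coro:representation_all} shows $\pi_\cS(H)$ cannot equal $\pi_\cS(L)$ for any $L$ of $\sigma$-degree $\le k-1$ (otherwise $H-L\equiv 0\pmod{\theta_\cS}$ would force $H=L$, contradicting the degrees), whence $c\notin\mathrm{G}_{n,k,\sigma}$. For each $s\in\cF$ one computes $c-\pi_\cS(L_s)=-\pi_\cS(H+L_s)=-\pi_\cS(s)$, and because $\ker s\subseteq U_\cS$ with $\dim_{\F_q}\ker s=r$, Remark \ref{rem:n-ker} gives $\mathrm{rk}(\pi_\cS(s))=n-r=\tau$. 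Hence $d(c,\pi_\cS(L_s))=\tau$ and $\pi_\cS(L_s)\in\mathrm{G}_{n,k,\sigma}\cap B_\tau(c)$. Finally, distinct $s\in\cF$ yield distinct $L_s$ (they share $H$), and $\pi_\cS$ is injective on polynomials of $\sigma$-degree $<n$ by Corollary \ref{coro:representation_all}; therefore $s\mapsto\pi_\cS(L_s)$ is injective and $|\mathrm{G}_{n,k,\sigma}\cap B_\tau(c)|\ge|\cF|$, which is the claimed bound.

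The remaining verifications (the degree bookkeeping, the non-membership of $c$, and the injectivity of $s\mapsto\pi_\cS(L_s)$) are routine once the reductions are in place. The step that really carries the proof, and where I would concentrate the care, is the rank identity $\mathrm{rk}(\pi_\cS(s))=\tau$: it hinges on the polynomials in $\cF$ having kernel genuinely inside $U_\cS$ (guaranteed by $\cF\subseteq\tilde{\mathcal P}_{r,\sigma}$) and of full dimension $r$, so that Remark \ref{rem:n-ker} applies with \emph{equality} rather than merely an inequality; this is exactly what pins the distance from $c$ to each codeword to the prescribed value $\tau$ and makes the exponential count transfer from $|\cF|$ to the list size.
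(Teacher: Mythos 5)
Your proof is correct and takes essentially the same approach the paper prescribes: invoking Lemma \ref{lemma:subspaceset} with $r=n-\tau$ and $g=n-\tau-k+1$ (so the denominator $q^{m(g-1)}$ matches the claimed bound) and then running the Wachter-Zeh Theorem 1 argument, in which differences of subspace polynomials sharing their top $\sigma$-coefficients become codewords at distance exactly $\tau$ from a common center, with Remark \ref{rem:n-ker} pinning the rank. The only cosmetic deviation is your choice of center $-\pi_\cS(H)$ (the negated common high-degree part) rather than the evaluation $c_R$ of a fixed member $R\in\cF$ used in the paper's analogous proofs; the two are equivalent up to negation and translation by the codeword $\pi_\cS(L_R)$, so the argument is unaffected.
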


Of course, we are interested in the smallest values of $\tau$ for which the expression for these lower bounds {\it grows exponentially} in the integer $n$. For the generalized Gabidulin code $\mathrm{G}_{n,k,\sigma}$ we have that
\[ \frac{{n \brack n-\tau}_q}{q^{m(n-\tau-k)}}\geq \frac{q^{\tau(n-\tau)}}{q^{m(n-\tau-k)}} \geq \]
\[ \geq q^{m(1-\epsilon)}\cdot q^{\tau(m+n)-\tau^2-m(d-\epsilon)}, \]
where $0\leq\epsilon<1$.

Hence, as already computed in \cite[Section III]{wachter-zhe_2013}, if \[ \tau \geq \frac{m+n}{2}-\sqrt{\frac{(m+n)^2}{4} -m(d-\epsilon)},\] where $0\leq\epsilon<1$; then the code $\mathrm{G}_{n,k,\sigma}$ cannot be list decoded efficiently, since we find a word $c$ for which $\mathrm{G}_{n,k,\sigma} \cap B_{\tau}(c)$ has exponential size in $n$.

\medskip
\noindent By using Lemma \ref{lemma:subspaceset}, we can now extend the previous result to other MRD codes in \eqref{eq:general_code_from_poly}. Precisely we have

\begin{theorem}\label{thm:listdecGen}
Let $k, n$ and $m \in \mathbb{Z}^+$ such that $k \leq n \leq m$. Let $\mathrm{C} = \mathrm{H}_{n,k,\sigma}(f_1,f_2)$ be an MRD code as in \eqref{eq:general_code_from_poly} with minimum distance $d=n-k+1$ and $\mathrm{C} \notin \{\mathrm{G}_{n,k}, \mathrm{G}_{n,k,\sigma}\}$. Let $\tau$ be an integer such that $\left\lfloor\frac{d-1}{2}\right\rfloor+1 \leq \tau \leq d-1$. Then, there exists a word $c \in \F_{q^m}^n \setminus \mathrm{C}$ such that
\[|\mathrm{C} \cap B_{\tau}(c)| \geq \frac{{n \brack n-\tau}_q}{q^{m(d-\tau)}}.\]
\end{theorem}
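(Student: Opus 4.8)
The plan is to follow the proof scheme of Theorem \ref{thm:listdecGab}, but to insert one additional pigeonhole step that compensates for the fact that, in a non-Gabidulin code from \eqref{eq:general_code_from_poly}, the constant ($\sigma^0$) coefficient of a codeword is not a free parameter. Put $r=n-\tau$; the hypotheses on $\tau$ guarantee $k\le r<n$, so that Lemma \ref{lemma:subspaceset} applies with $g=r-k+1$. First I would invoke it on the subset $\tilde{\mathcal{P}}_{r,\sigma}$ of $\sigma$-subspace polynomials of $\sigma$-degree $r$ whose kernels lie in $U_{\cS}$, obtaining a family $\mathcal{F}$ whose members share their top $r-k+1$ coefficients (those of $x^{\sigma^k},\dots,x^{\sigma^r}$) and with $|\mathcal{F}|\ge {n\brack r}_q/q^{m(r-k)}$. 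By Remark \ref{rem:n-ker}, each $s\in\mathcal{F}$ has an $r$-dimensional kernel inside $U_{\cS}$, so its associated word $c_s=(s(\alpha_1),\dots,s(\alpha_n))$ has rank exactly $n-r=\tau$.

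Next I would sort the polynomials of $\mathcal{F}$ by their constant coefficient. Since there are at most $q^m$ possible values, the pigeonhole principle yields $\mathcal{F}'\subseteq\mathcal{F}$ whose members in addition share the same $\sigma^0$-coefficient, with
\[ |\mathcal{F}'|\ge \frac{|\mathcal{F}|}{q^m}\ge \frac{{n\brack r}_q}{q^{m(r-k+1)}}=\frac{{n\brack n-\tau}_q}{q^{m(d-\tau)}}, \]
where I used $r-k+1=d-\tau$. Let $h_0$ be the common part of the elements of $\mathcal{F}'$, namely the $\sigma$-polynomial carrying the shared coefficients of $x^{\sigma^0}$ and of $x^{\sigma^k},\dots,x^{\sigma^r}$. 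The decisive observation is that for every $s\in\mathcal{F}'$ the difference $s-h_0$ is supported only on the exponents $\sigma^1,\dots,\sigma^{k-1}$; since $f_1,f_2$ are additive, setting $a=0$ in \eqref{eq:codes} shows that $\{\sum_{i=1}^{k-1}a_i x^{\sigma^i}\}$ is a subcode of $\mathrm{C}$, so each $s-h_0$ indeed represents a codeword. This step, which works uniformly over every family in \eqref{eq:general_code_from_poly}, is exactly where the extra factor $q^m$ relative to Theorem \ref{thm:listdecGab} is paid: in the Gabidulin case the constant term is free, whereas here forcing $s-h_0$ into $\mathrm{C}$ requires annihilating both its $\sigma^k$- and its $\sigma^0$-coefficient, and the latter is precisely what the second pigeonhole arranges.

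Finally I would take the word $c=-c_{h_0}$. For each $s\in\mathcal{F}'$ the element $c_{s-h_0}=c_s-c_{h_0}$ belongs to $\mathrm{C}$ and satisfies $\mathrm{rk}(c_{s-h_0}-c)=\mathrm{rk}(c_s)=\tau$, so $c_{s-h_0}\in B_\tau(c)$; these codewords are pairwise distinct, since $s-h_0$ has $\sigma$-degree at most $k-1<n$ and distinct $\sigma$-polynomials of $\sigma$-degree below $n$ give distinct codewords by Corollary \ref{coro:representation_all}. Therefore $|\mathrm{C}\cap B_\tau(c)|\ge|\mathcal{F}'|$, which is the asserted bound. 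The one point demanding care — and the only real obstacle I foresee — is verifying $c\notin\mathrm{C}$. When $r>k$ this is immediate, because the representative $-h_0$ of $c$ has $\sigma$-degree $r$ exceeding the maximal $\sigma$-degree $k$ of any codeword of $\mathrm{C}$; the boundary case $\tau=d-1$, where $r=k$, has to be settled by inspecting the $\sigma^k$- and $\sigma^0$-coefficients of $-h_0$ directly and, if necessary, selecting the residue class $\mathcal{F}'$ so that $-h_0$ violates the defining constraints of $\mathrm{C}$.
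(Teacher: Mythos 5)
Your argument takes a genuinely different route from the paper's, and on most of the stated range of $\tau$ it works. The paper never pigeonholes on the constant coefficient: it takes the family $\mathcal{F}$ of Lemma \ref{lemma:subspaceset} (free coefficients in positions $0,\dots,k-2$, fixed ones in positions $k-1,\dots,n-\tau$) and composes on the right with $x^\sigma$, so that the free coefficients shift to positions $1,\dots,k-1$, the $\sigma^0$-coefficient becomes $0$, and the difference of any two members lands in the subcode $\left\{\sum_{i=1}^{k-1}a_ix^{\sigma^i}\right\}\subseteq \cH_{m,k,\sigma}(f_1,f_2)$; the center $c_R$, with $R\in\mathcal{F}\circ x^\sigma$, then lies outside $\mathrm{C}$ simply because it is a nonzero word of rank $\tau<d$ and $\mathrm{C}$ is MRD. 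Your translation-by-$h_0$ device does the same bookkeeping at the same cost ($d-\tau+1$ fixed coefficients, bound ${n \brack n-\tau}_q/q^{m(d-\tau)}$), and it has the incidental merit of working with honest $\sigma$-subspace polynomials whose kernels lie in $U_{\cS}$, so Remark \ref{rem:n-ker} applies verbatim (in the paper's version one must note that $\ker(P\circ x^{\sigma})=(\ker P)^{\sigma^{-1}}$ still meets $U_{\cS}$ in dimension $n-\tau$, which is immediate only when $U_{\cS}$ is $\sigma$-stable, e.g.\ $U_{\cS}=\F_{q^n}$). For $\left\lfloor\frac{d-1}{2}\right\rfloor+1\leq\tau\leq d-2$ your proof is complete: distinctness of the codewords and $c\notin\mathrm{C}$ both follow from Corollary \ref{coro:representation_all}, since $-h_0$ has $\sigma$-degree $n-\tau>k$.

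The genuine gap is the boundary case $\tau=d-1$, which the statement includes. There $n-\tau=k$, so $h_0=\beta_0x+x^{\sigma^k}$, and $c=-c_{h_0}$ lies in $\mathrm{C}$ precisely when some $a\in\F_{q^m}$ satisfies $f_1(a)=-\beta_0$ and $f_2(a)=-1$. Such bad values $\beta_0$ really occur: for $\cH_{m,k,\sigma}(\eta,h)$ the unique $a$ with $\eta a^{q^h}=-1$ yields the bad value $\beta_0=-a$, and this is compatible with the MRD condition $\N_{q^m/q}(\eta)\neq(-1)^{mk}$, so it cannot be excluded. Your proposed remedy — select the residue class so that $-h_0$ violates the defining constraints of $\mathrm{C}$ — is not justified: the pigeonhole principle only guarantees that the \emph{largest} class has size at least $|\mathcal{F}|/q^m$, and nothing in your argument prevents that largest class from being the bad one, with every good class too small to give the bound. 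Fortunately the gap closes easily inside your own framework: since $f_1,f_2$ are additive, $\cH_{m,k,\sigma}(f_1,f_2)$ is closed under subtraction, hence if $c\in\mathrm{C}$ then $c_s=c_{s-h_0}-c\in\mathrm{C}$ for any $s\in\mathcal{F}'$; but $c_s$ is a nonzero word of rank $\tau<d$, contradicting the minimum distance of $\mathrm{C}$. Replacing your last step by this observation makes the proof valid on the whole range of $\tau$ (and lets you drop the degree comparison altogether).
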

\begin{proof}
Let $\tilde{\mathcal{P}}_{n-\tau,\sigma} \subset \mathcal{L}_{m,q}[x]$ be the set of $\sigma$-subspace polynomials of $\sigma$-degree $n-\tau$ associated with $(n-\tau)$-dimensional subspaces of $\F_{q^m}$ contained in $U_{\mathcal{S}}$.
By Lemma \ref{lemma:subspaceset}, there exists a subset ${\cal F}$ of $\tilde{\mathcal{P}}_{n-\tau,\sigma}$ whose elements agree on the last $n-k-\tau+1=d-\tau$ $\sigma$-coefficients, with cardinality at least
\[ \frac{{n \brack n-\tau}_q}{q^{m(d-\tau)}}. \]
Precisely,
\begin{small}
\[{\cal F}=\left\{ \sum _{i=0}^{l-1} a_i x^{\sigma^i}+b_{l}x^{\sigma^{l}}+\cdots+b_{n-\tau-1}x^{\sigma^{n-\tau-1}}+x^{\sigma^{n-\tau}} \colon \right.\]
\[\left. (a_0,a_1,\ldots,a_{l-1}) \in \mathcal{A}  \right\}, \]
\end{small}
where $\mathcal{A}$ is a subset of $\F_{q^m}^{l-1}$ such that $|\mathcal{A}| \geq \frac{{n \brack n-\tau}_q}{q^{m(d-\tau)}}$, and the $b_j$ are fixed elements of $\F_{q^m}$, where $l=n-\tau-(n-k-\tau+1)=k-1$.

Define ${\cal F}' :=  {\cal F} \circ x^{\sigma}$ and note that the $\sigma$-polynomials of ${\cal F}'$ are not $\sigma$-subspace polynomials, although they still have $q^{n-\tau}$ roots.
Clearly they are of type
\[ a_0x^\sigma+a_1x^{\sigma^2}+\ldots+a_{l-1}x^{\sigma^{l}}+b_{l}x^{\sigma^{l+1}}+\ldots+b_{n-\tau-1}x^{\sigma^{n-\tau}}+x^{\sigma^{n-\tau+1}}. \]

Let $R$ be a polynomial in ${\cal F}'$. Note that $c_R \notin \mathrm{H}_{n,k,\sigma}(f_1,f_2)$, since $$\mathrm{rk}(c_{R})=n-\dim_{\F_q}\ker R = \tau < d.$$
Arguing as in the proof of \cite[Theorem 1]{wachter-zhe_2013}, we have that $c_{R-P} \in \mathrm{H}_{n,k,\sigma}(f_1,f_2)$, for each $P \in \mathcal{F}'$, and so in $B_{\tau}(c_R)$ is contains a subset of codewords of $\mathrm{H}_n(f_1,f_2)$ of size at least
\begin{equation}\label{eq:lowerlimit}
\frac{{n \brack n-\tau}_q}{q^{m(d-\tau)}},
\end{equation}
and the assertion follows.
\end{proof}

It is straightforward to show that in this case a code $\mathrm{C}$ of type $\mathrm{H}_{n,k,\sigma}(f_1,f_2)$ with $\mathrm{C} \notin \{\mathrm{G}_{n,k}, \mathrm{G}_{n,k,\sigma}\}$ cannot be list decoded efficiently at the radius $\tau$ if
\[ \tau \geq \frac{m+n}{2}-\sqrt{\frac{(m+n)^2}{4} -m(d+1-\epsilon)},\]
where $0\leq\epsilon<1$.

\section{More MRD codes not list decodable efficiently at all}

Arguing as in Section IV of \cite{raviv_2016}, and exploiting results of the previous section, we exhibit here infinite families of generalized Gabidulin codes with minimum distance $d$, that cannot be list decoded efficiently at all. More precisely, for any radius greater than or equal to $\left\lfloor \frac{d-1}2 \right\rfloor$ we prove the existence of a word $c$ for which $\mathrm{G}_{n,k,\sigma} \cap B_{\tau}(c)$ has exponential size in $n$. Also, we show infinite families of the other types of MRD codes in \eqref{eq:general_code_from_poly} which are not efficiently list decodable for any radius larger than or equal to $\left\lfloor \frac{d-1}2 \right\rfloor+2$, where $d$ represents their minimum distance.
In what follows we will consider $\underline{\alpha}=(\alpha_1,...,\alpha_n)$, to be an ordered $\F_q$-basis of $\F_{q^n}\subseteq \F_{q^m}$.

Also, we remind here that the symbol $\mathrm{G}_{n,k,\sigma}[\underline{\alpha}]$, denotes the rank-metric code obtained by evaluating each element of $\mathcal{G}_{m,k,\sigma}$ in $\underline{\alpha} \in \F_{q^m}^n$.

\smallskip

\subsection{Infinite families of Generalized Gabidulin codes not list decodable efficiently at all}

In this section we provide some refining of Theorems $4$ of \cite{raviv_2016} and of subsequent results in \cite{raviv_2016} directly ensuing from it.

\begin{theorem}\label{thm:listdecGabatall}
Let $k, \tau, n$ and $m \in \mathbb{Z}^+$ positive integers such that $k \leq n$, $\tau \mid n $ and $n \mid m$.

If $\tau > \left\lfloor \frac{d-1}2 \right\rfloor$, then there exists a word $c \in \F_{q^m}^n \setminus \mathrm{G}_{n,k,\sigma}[\underline{\alpha}]$ such that
\begin{equation}\label{eq:q^n/q^t}
|\mathrm{G}_{n,k,\sigma}[\underline{\alpha}] \cap B_{\tau}(c)| \geq \frac{q^n-1}{q^\tau-1}.
\end{equation}
\end{theorem}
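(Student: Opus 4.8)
The plan is to adapt Construction 2 of Raviv--Wachter-Zeh, realized here through Proposition~\ref{lemma:trace} with the choice $t=\tau$, to produce a single received word $c$ whose decoding list inside $\mathrm{G}_{n,k,\sigma}[\underline{\alpha}]$ has size $(q^n-1)/(q^\tau-1)$. Since $\tau\mid n$ and $n\mid m$, Proposition~\ref{lemma:trace} applies with $t=\tau$ and furnishes the set
\[
\mathcal{T}=\left\{ f_\beta=\sum_{i=0}^{\frac{n}{\tau}-1}\beta^{\sigma^{i\tau}-\sigma^{n-\tau}}x^{\sigma^{i\tau}} \colon \beta\in\F_{q^n}^*\right\},
\]
a family of $\sigma$-subspace polynomials of $\sigma$-degree $n-\tau$ that all \emph{agree on their last $\tau$ $\sigma$-coefficients} and satisfy $|\mathcal{T}|=(q^n-1)/(q^\tau-1)$. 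Each $f_\beta$ has $\dim_{\F_q}\ker f_\beta=n-\tau$, and because $\underline{\alpha}$ is a basis of $U_\cS=\F_{q^n}$, the kernels lie inside $U_\cS$, so Remark~\ref{rem:n-ker} gives $\mathrm{rk}(c_{f_\beta})=n-\dim_{\F_q}\ker f_\beta=\tau$ for each codeword-candidate.

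\textbf{The construction of $c$.} First I would set $\mathcal{T}':=\mathcal{T}\circ x^\sigma$, exactly as in the proof of Theorem~\ref{thm:listdecGen}: composing with $x^\sigma$ shifts every $\sigma$-exponent up by one, producing polynomials of $\sigma$-degree $n-\tau+1$ that still have $q^{n-\tau}$ roots (so still rank $\tau$), but whose lowest term is now $x^\sigma$ rather than $x$. Fix any $R\in\mathcal{T}'$ and put $c:=c_R=(R(\alpha_1),\dots,R(\alpha_n))$; then $\mathrm{rk}(c)=\tau$. Since $\tau>\lfloor\frac{d-1}{2}\rfloor$ forces $\tau<d$ (one must verify the hypotheses keep $\tau\le d-1$, or else the claim is vacuous), a word of rank $\tau<d$ cannot be a codeword, so $c\notin\mathrm{G}_{n,k,\sigma}[\underline{\alpha}]$. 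The key point is then that for every $P\in\mathcal{T}'$, the difference $R-P$ evaluates to a codeword: because all members of $\mathcal{T}$ (hence of $\mathcal{T}'$) share their top $\tau$ $\sigma$-coefficients, the top coefficient and the next $\tau-1$ coefficients cancel in $R-P$, so $R-P$ has $\sigma$-degree at most $n-\tau+1-\tau = k-1$ after accounting for $d=n-k+1$; thus $c_{R-P}$ lies in the Gabidulin code $\mathrm{G}_{n,k,\sigma}[\underline{\alpha}]$ via the representation~\eqref{eq:polynomials_GTGC}. Moreover $\mathrm{rk}(c_R-c_{R-P})=\mathrm{rk}(c_P)=\tau$, so $c_{R-P}\in B_\tau(c)$. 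This yields an injection $P\mapsto c_{R-P}$ from $\mathcal{T}'$ into $\mathrm{G}_{n,k,\sigma}[\underline{\alpha}]\cap B_\tau(c)$, and since $|\mathcal{T}'|=|\mathcal{T}|=(q^n-1)/(q^\tau-1)$, the bound \eqref{eq:q^n/q^t} follows.

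\textbf{The main obstacle} I anticipate is the bookkeeping of $\sigma$-degrees that certifies $c_{R-P}$ genuinely lands in the code of dimension $k$, i.e.\ that the number of shared top coefficients, namely $\tau$, together with the shift by $x^\sigma$, leaves a polynomial whose surviving $\sigma$-exponents all lie in the Gabidulin window $\{1,\dots,k-1\}$ (or equivalently $\{0,\dots,k-1\}$ after uncomposing). This is precisely where the hypothesis $\tau>\lfloor\frac{d-1}{2}\rfloor$ is used: it guarantees $n-\tau<\tau+(k-1)$ so that agreeing on the top $\tau$ coefficients kills enough of the high-degree part. One must also confirm that distinct $P$ give distinct codewords (injectivity of $\pi_\cS$ on $\mathrm{Tran}$, which holds by Corollary~\ref{coro:representation_all} since all relevant $\sigma$-degrees are $<n$), and that the exceptional subtraction $P=R$ contributing the zero codeword is correctly counted. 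These are all routine once the degree inequality is laid out carefully; the only genuinely delicate step is pinning down exactly which top coefficients cancel, and it is the shared-top-coefficient property of $\mathcal{T}$ from Proposition~\ref{lemma:trace} that does the essential work.
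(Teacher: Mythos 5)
Your proposal follows the paper's strategy in outline (Proposition~\ref{lemma:trace} with $t=\tau$, differences $R-P$, ranks via Remark~\ref{rem:n-ker}, injectivity via Corollary~\ref{coro:representation_all}), but the extra step of composing with $x^\sigma$ introduces a genuine gap. The composition $\mathcal{T}':=\mathcal{T}\circ x^\sigma$ raises every $\sigma$-degree by one, so for $R,P\in\mathcal{T}'$ the shared top $\tau$ coefficients only force $\deg_\sigma(R-P)\le (n-\tau+1)-\tau=n-2\tau+1$. Membership in $\mathcal{G}_{m,k,\sigma}$ requires $n-2\tau+1\le k-1$, i.e.\ $2\tau\ge d+1$. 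But the hypothesis $\tau>\lfloor\frac{d-1}{2}\rfloor$ only yields $2\tau\ge d$: when $d$ is even and $\tau=d/2$ --- which is exactly the case the theorem is designed for, and the one exploited in Corollary~\ref{GGnondecatall} where $d=2\tau$ --- you only get $\deg_\sigma(R-P)\le k$, and the coefficient of $x^{\sigma^k}$ is in general nonzero: the polynomials $f_\beta$ of Proposition~\ref{lemma:trace} need not agree on the coefficient of $x^{\sigma^{n-2\tau}}$, so after the shift they disagree at degree $n-2\tau+1=k$. Since both $R-P$ and every element of $\mathcal{G}_{m,k,\sigma}$ lie in the transversal $\mathrm{Tran}$ of Corollary~\ref{coro:representation_all}, having $c_{R-P}\in\mathrm{G}_{n,k,\sigma}[\underline{\alpha}]$ would force $R-P\in\mathcal{G}_{m,k,\sigma}$, which fails. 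So your claimed implication ``$\tau>\lfloor\frac{d-1}{2}\rfloor$ guarantees $n-\tau<\tau+(k-1)$'' is false for even $d$, and the argument breaks precisely in the critical case.

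The fix is to drop the composition entirely, which is what the paper does: work with $\mathcal{T}$ itself. For a generalized Gabidulin code there is no constraint tying the coefficient of $x$ to that of $x^{\sigma^k}$ --- the code consists of \emph{all} $\sigma$-polynomials of $\sigma$-degree at most $k-1$ --- so there is no need to kill the constant term of the subspace polynomials. Then $\deg_\sigma(R-P)\le n-2\tau\le n-d=k-1$ already follows from $2\tau\ge d$, which the hypothesis does give, and the rest of your argument (rank computation, injectivity, and $c_R\notin\mathrm{G}_{n,k,\sigma}[\underline{\alpha}]$ because $\mathrm{rk}(c_R)=\tau<d$) goes through verbatim. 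The shift by $x^\sigma$ is needed only in Theorem~\ref{thm:listdecGenatall}, where the codes $\mathcal{H}_{m,k,\sigma}(f_1,f_2)$ do carry such a constraint, and there the paper pays for it by taking $t=\tau+1$ (hence the hypothesis $\tau+1\mid n$) and enlarging the ball to $B_{\tau+1}$; importing the shift into the Gabidulin setting while keeping $t=\tau$ and the ball $B_\tau$ is what creates the inconsistency.
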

\begin{proof}
By Proposition \ref{lemma:trace}, there exists  a set $\mathcal{T}\subset \mathcal{L}_{m,q}[x] $
of $\sigma$-subspace polynomials whose elements have $\sigma$-degree $n-\tau$, agreeing on the last $\tau$ $\sigma$-coefficients and
\[ |\mathcal{T}| = \frac{q^{n}-1}{q^\tau-1}. \]
As in the proof of \cite[Theorem 1]{wachter-zhe_2013}, we can choose a polynomial $R \in \mathcal{T}$ and prove that for each $P \in \mathcal{T}$ we have that
\[ c_{R-P} \in \mathrm{G}_{k,n,\sigma}[\underline{\alpha}] \cap B_{\tau}(c_R). \]
Hence the assertion.
\end{proof}

Consequently, we have the following result.

\begin{corollary}\label{GGnondecatall}
Let $\tau, n,m \in \mathbb{Z}^+$ such that $\tau \mid n $ and $n\mid m$. Then any generalized Gabidulin code $\mathrm{G}_{n,k,\sigma}[\underline{\alpha}]$ with minimum distance $d=2\tau$, cannot be list decoded efficiently at all.
\end{corollary}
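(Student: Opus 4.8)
The plan is to derive Corollary~\ref{GGnondecatall} as a direct consequence of Theorem~\ref{thm:listdecGabatall}, which is the substantive result; the corollary merely interprets the exponential lower bound in the language of efficient list decodability. First I would verify that the hypotheses of Theorem~\ref{thm:listdecGabatall} are met. We are given $\tau \mid n$ and $n \mid m$, matching the divisibility requirements. The remaining hypothesis is $\tau > \left\lfloor \frac{d-1}{2} \right\rfloor$. Since the minimum distance is $d = 2\tau$, we have $\left\lfloor \frac{d-1}{2} \right\rfloor = \left\lfloor \frac{2\tau - 1}{2} \right\rfloor = \tau - 1$, so the inequality $\tau > \tau - 1$ holds trivially. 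Hence Theorem~\ref{thm:listdecGabatall} applies with this value of $\tau$, and also $k = n - d + 1 = n - 2\tau + 1$ is determined, which one should check is a legitimate positive code dimension (i.e.\ $k \geq 1$, equivalently $n \geq 2\tau - 1$, which is forced since $\tau \mid n$ and $\tau < n$ unless $n = \tau$, a degenerate case to note).

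The core step is to invoke the conclusion of Theorem~\ref{thm:listdecGabatall}: there exists a word $c \in \F_{q^m}^n \setminus \mathrm{G}_{n,k,\sigma}[\underline{\alpha}]$ with
\[ |\mathrm{G}_{n,k,\sigma}[\underline{\alpha}] \cap B_{\tau}(c)| \geq \frac{q^n - 1}{q^\tau - 1}. \]
I would then argue that the right-hand side grows exponentially in the length $n$. Writing $\tau \mid n$, the quantity $\frac{q^n - 1}{q^\tau - 1} = 1 + q^\tau + q^{2\tau} + \cdots + q^{n - \tau}$ is a geometric sum of $n/\tau$ terms, and is bounded below by $q^{n - \tau}$. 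Since $\tau$ is fixed relative to a growing family in which $n \to \infty$ (along the arithmetic progression of admissible lengths determined by the divisibility constraints), $q^{n-\tau}$ is exponential in $n$. Thus the list $B_\tau(c) \cap \mathrm{G}_{n,k,\sigma}[\underline{\alpha}]$ has size exponential in the code length.

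Finally I would close the argument by appealing to the definition of efficient list decodability given in the Introduction: a list decoding algorithm at radius $\tau$ must output every codeword within distance $\tau$ of the received word, so merely writing down a list of exponential size in $n$ already incurs exponential time. Since we have exhibited a word $c$ whose list is exponentially large, no polynomial-time list decoding algorithm at radius $\tau$ can exist, and therefore $\mathrm{G}_{n,k,\sigma}[\underline{\alpha}]$ is not efficiently list decodable at radius $\tau = d/2$.

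The main obstacle here is not any deep mathematics but rather the bookkeeping: one must present the statement as an infinite \emph{family} rather than a single code, making explicit that for each admissible triple $(\tau, n, m)$ one obtains a code of the asserted minimum distance, and that letting $n$ range over the (infinitely many) multiples of $\tau$ that themselves divide suitable $m$ yields genuinely distinct codes of unbounded length. The subtle point to handle with care is ensuring the exponential growth is in the code length $n$ and not merely in some auxiliary parameter, so that the notion of efficiency (polynomial in $n$) is correctly contradicted; this is exactly why the hypothesis $\tau \mid n$ with $\tau$ held fixed across the family is essential, as it guarantees $\frac{q^n-1}{q^\tau-1} \geq q^{n-\tau}$ remains exponential in $n$.
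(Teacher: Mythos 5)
Your proposal is correct and follows essentially the same route as the paper: invoke Theorem~\ref{thm:listdecGabatall} (whose hypothesis holds because $d=2\tau$ gives $\left\lfloor \frac{d-1}{2}\right\rfloor = \tau-1 < \tau$), and conclude that the list size $\frac{q^n-1}{q^\tau-1}$ at radius $\tau = \left\lfloor \frac{d-1}{2}\right\rfloor + 1$ is exponential in $n$, so no efficient list decoder exists at that or any larger radius. The only cosmetic difference is your justification of exponentiality: instead of holding $\tau$ fixed along a growing family, the paper implicitly uses that $2\tau = d \le n$ forces $\tau \le n/2$, so the bound is at least on the order of $q^{n/2}$ for every admissible parameter choice, which is slightly cleaner and covers families where $\tau$ scales with $n$.
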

\begin{proof}
The unique decoding radius of any code  $\mathrm{G}_{n,k,\sigma}[\underline{\alpha}]$ indicated in the statement, is
\[ \left\lfloor \frac{d-1}2 \right\rfloor. \]
Also, we have proved in Theorem \ref{thm:listdecGabatall} that we may find at least $(q^n-1)/(q^\tau-1)\sim q^{n/2}$ codewords in a ball of radius $\tau= \left\lfloor \frac{d-1}2 \right\rfloor +1$, centered in a suitable word. Hence the list size of such a word is exponential in $n$. It follows that these codes cannot be list decoded efficiently at all.
\end{proof}

\subsection{Infinite families of codes in $\mathrm{H}_{n}(f_1,f_2)$ almost not list decodable efficiently at all}

We now prove similar results for other examples of MRD codes in \eqref{eq:general_code_from_poly}.

\begin{theorem}\label{thm:listdecGenatall}
Let $k,\tau,n$ and $m \in \mathbb{Z}^+$ such that $k \leq n$, $\tau+1 \mid n$ and $n \mid m$. Let $\mathrm{C} = \mathrm{H}_{n,k,\sigma}(f_1,f_2)[\underline{\alpha}]$ be an MRD code defined as in \eqref{eq:general_code_from_poly} with minimum distance $d=n-k+1$ and $\mathrm{C}$.

If $\tau > \left\lfloor \frac{d-1}2 \right\rfloor$, then there exists a word $c \in \F_{q^m}^n \setminus \mathrm{C}$ such that
\[|\mathrm{C} \cap B_{\tau+1}(c)| \geq \frac{q^n-1}{q^{\tau+1}-1}.\]
\end{theorem}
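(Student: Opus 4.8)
The plan is to transplant the argument of Theorem \ref{thm:listdecGabatall} to the general code $\mathrm{H}_{n,k,\sigma}(f_1,f_2)[\underline{\alpha}]$, paying for the two extra linear constraints of this family (the coefficient of $x$ must equal $f_1(a)$ and that of $x^{\sigma^k}$ must equal $f_2(a)$) exactly as in the proof of Theorem \ref{thm:listdecGen}, namely by passing from $\sigma$-subspace polynomials to their images under the shift $x\mapsto x^{\sigma}$. Concretely, since $\tau+1\mid n$ and $n\mid m$, I would first apply Proposition \ref{lemma:trace} with $t=\tau+1$ and $U_{\mathcal S}=\F_{q^n}$, obtaining a set $\mathcal{T}$ of $\sigma$-subspace polynomials of $\sigma$-degree $n-\tau-1$, all agreeing on their top $\tau+1$ $\sigma$-coefficients, with kernels contained in $\F_{q^n}$ and $|\mathcal{T}|=(q^n-1)/(q^{\tau+1}-1)$.

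Next I would set $\mathcal{T}':=\mathcal{T}\circ x^{\sigma}$. Every $P'\in\mathcal{T}'$ then has $\sigma$-degree $n-\tau$, vanishing coefficient of $x$, and the same number $q^{\,n-\tau-1}$ of roots as its preimage, since its kernel is $\sigma^{-1}(\ker P)\subseteq\F_{q^n}=U_{\mathcal S}$ (here one uses that the Frobenius power $\sigma$ stabilizes the subfield $\F_{q^n}$). By Remark \ref{rem:n-ker} this yields $\mathrm{rk}(c_{P'})=n-(n-\tau-1)=\tau+1$ for every $P'\in\mathcal{T}'$, and since $P\mapsto P'$ is a bijection we keep $|\mathcal{T}'|=(q^n-1)/(q^{\tau+1}-1)$.

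Now fix one $R\in\mathcal{T}'$; its rank is $\tau+1<d$, so $c_R\notin\mathrm{C}$. For any $P\in\mathcal{T}'$ the difference $R-P$ has coefficient of $x$ equal to $0$ (both do), and because the elements of $\mathcal{T}$ agree on their top $\tau+1$ coefficients, $R-P$ has $\sigma$-degree at most $(n-\tau)-(\tau+1)=n-2\tau-1$. The hypothesis $\tau>\left\lfloor\frac{d-1}{2}\right\rfloor$ forces $d\le 2\tau+1$, equivalently $n-2\tau-1\le k-1$, so $R-P$ is a $\sigma$-polynomial of $\sigma$-degree at most $k-1$ with vanishing constant term. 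As $f_1$ and $f_2$ are additive, $f_1(0)=f_2(0)=0$, whence $R-P$ belongs to the $a=0$ subcode $\{\sum_{i=1}^{k-1}a_i x^{\sigma^i}\}\subseteq\cH_{m,k,\sigma}(f_1,f_2)$; reducing modulo $\theta_{\mathcal S}$ gives $c_{R-P}\in\mathrm{C}$. Moreover $c_{R-P}=c_R-c_P$ with $\mathrm{rk}(c_P)=\tau+1$, so $c_{R-P}\in B_{\tau+1}(c_R)$. Distinct $P$ produce distinct $c_{R-P}$, because all polynomials involved have $\sigma$-degree below $n$ and $\pi_{\mathcal S}$ is injective on them by Corollary \ref{coro:representation_all}; hence $P\mapsto c_{R-P}$ embeds $\mathcal{T}'$ into $\mathrm{C}\cap B_{\tau+1}(c_R)$ and the claimed bound follows with $c=c_R$.

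I expect the main obstacle to be the bookkeeping around the shift $x\mapsto x^{\sigma}$, which must do three things at once: annihilate the coefficient of $x$ (so the $f_1,f_2$ constraints are satisfied by taking $a=0$), keep the kernels inside $\F_{q^n}$ so that Remark \ref{rem:n-ker} returns the exact rank $\tau+1$ rather than only an inequality, and push the $\sigma$-degree of $R-P$ down to precisely $n-2\tau-1$ so that it meets the threshold $k-1$ under the hypothesis on $\tau$. It is exactly the choice $t=\tau+1$ (instead of $t=\tau$) that forces the radius up by one unit relative to the Gabidulin case of Theorem \ref{thm:listdecGabatall}, and carrying this accounting through with the correct inequalities is the delicate point.
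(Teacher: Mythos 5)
Your proposal follows the paper's own proof essentially step for step: it invokes Proposition \ref{lemma:trace} with $t=\tau+1$, forms $\mathcal{T}'=\mathcal{T}\circ x^{\sigma}$ to annihilate the coefficient of $x$, places each $R-P$ in the $a=0$ subcode of $\cH_{m,k,\sigma}(f_1,f_2)$ via the degree bound forced by $\tau>\left\lfloor\frac{d-1}{2}\right\rfloor$, bounds $\mathrm{rk}(c_P)$ by $\tau+1$ through Remark \ref{rem:n-ker}, and gets distinctness from Corollary \ref{coro:representation_all}. It is correct and, if anything, slightly more explicit than the paper (e.g., about the kernel $\sigma^{-1}(\ker P)\subseteq \F_{q^n}$ of the shifted polynomials and about why distinct $P$ give distinct codewords).
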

\begin{proof}
By Proposition \ref{lemma:trace},
\[\mathcal{T}=\left\{\sum_{i=0}^{\frac{n}{\tau+1}-1} \beta^{\sigma^{n-\tau-1}-\sigma^{i(\tau+1)}} x^{\sigma^{i(\tau+1)}} \colon \beta \in \F_{q^n}^* \right\}\subset \mathcal{L}_{m,\sigma}[x]\]
is a set of $\sigma$-subspace polynomials whose elements have $\sigma$-degree $n-\tau-1$, agreeing on the last $\tau+1$ $\sigma$-coefficients and
\[ |\mathcal{T}| = \frac{q^{n}-1}{q^{\tau+1}-1}. \]
Then we consider
\[ \mathcal{T}':=\mathcal{T}\circ x^\sigma, \]
which is a set of $\sigma$-polynomials of degree $n-\tau$, agreeing on the last $\tau+1$ $\sigma$-coefficients and
\[ |\mathcal{T}'|= |\mathcal{T}| = \frac{q^{n}-1}{q^{\tau+1}-1}. \]
Note that the polynomials in $\mathcal{T}'$ have still $q^{n-\tau-1}$ roots.

Let $R$ be a polynomial in ${\cal T}'$. Note that $c_R \notin \mathrm{C}$, since $\mathrm{rk}(c_P)=\tau<d$.
For any $P \in {\cal T}'$ we have that $c_{R-P} \in \mathrm{C}$.
Indeed,
\[\deg_\sigma (R-P)\leq n-\tau-(\tau+1)=n-d-1=k-2<k-1\]
and the coefficient of $\sigma$-degree $0$ is zero, hence $R-P \in \mathcal{H}_{m,k,\sigma}(f_1,f_2)$ and $c_{R-P} \in \mathrm{H}_{n,k,\sigma}(f_1,f_2)[\underline{\alpha}]$.
Now, we show that $c_{R-P} \in B_{\tau+1}(c_R)$ for each $P \in \mathcal{F}'$.
In fact, by Remark \ref{rem:n-ker}, since $\ker P \subseteq \F_{q^n}$ we have that
\begin{small}
\[ \mathrm{rk}(c_R-c_{R-P})=\mathrm{rk}(c_P)\leq n-\dim_{\F_q} \ker P=n-(n-\tau-1)=\tau+1. \]
\end{small}
Therefore, arguing as in Theorem \ref{thm:listdecGab}, we have shown that $B_{\tau+1}(c_R)$ contains a subset of codewords of $\mathrm{C}$ of size at least $\frac{q^n-1}{q^{\tau+1}-1}.$
\end{proof}

\begin{corollary}\label{quasinotdecatall}
Let $\tau, n,m \in \mathbb{Z}^+$ such that $\tau +1 \mid n $ and $n\mid m$. Let $\mathrm{C} = \mathrm{H}_{n,k,\sigma}(f_1,f_2)[\underline{\alpha}]$ be an MRD code defined as in \eqref{eq:general_code_from_poly} with minimum distance $d=2\tau$.  Then $\mathrm{C}$ cannot be list decoded efficiently at any radius greater than or equal to $\left\lfloor \frac{d-1}2 \right\rfloor+2$.
\end{corollary}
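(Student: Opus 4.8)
The plan is to derive Corollary \ref{quasinotdecatall} as a direct consequence of Theorem \ref{thm:listdecGenatall}, exactly mirroring the passage from Theorem \ref{thm:listdecGabatall} to Corollary \ref{GGnondecatall}. The first step is purely bookkeeping on the radius: since $d = 2\tau$, the unique decoding radius is
\[ \left\lfloor \frac{d-1}{2} \right\rfloor = \left\lfloor \frac{2\tau - 1}{2}\right\rfloor = \tau - 1, \]
so the target radius $\left\lfloor \frac{d-1}{2}\right\rfloor + 2$ is precisely $\tau + 1$, which is exactly the radius appearing in the conclusion of Theorem \ref{thm:listdecGenatall}.

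Next I would verify that the hypotheses of Theorem \ref{thm:listdecGenatall} are in force. The divisibility assumptions $\tau + 1 \mid n$ and $n \mid m$ are given, and the required strict inequality $\tau > \left\lfloor \frac{d-1}{2}\right\rfloor = \tau - 1$ holds trivially. Invoking the theorem then yields a word $c \in \F_{q^m}^n \setminus \mathrm{C}$ with
\[ |\mathrm{C} \cap B_{\tau+1}(c)| \geq \frac{q^n - 1}{q^{\tau+1} - 1}. \]

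It remains to see that this intersection is exponentially large in the length $n$ and that the defect persists at every larger radius. For the first point I would use the elementary estimate $\frac{q^n-1}{q^{\tau+1}-1} \geq q^{\,n - \tau - 1}$ together with the MRD relation $2\tau = d = n - k + 1$; since $k \geq 1$ this forces $\tau \leq n/2$, whence $|\mathrm{C}\cap B_{\tau+1}(c)| \geq q^{\,n/2 - 1}$, which grows exponentially as $n$ ranges over the infinitely many multiples of $\tau+1$ (taking, say, $m = n$). For the second point I would simply note the monotonicity $B_{\tau+1}(c) \subseteq B_{\tau'}(c)$ for every $\tau' \geq \tau + 1$, so the same word $c$ already certifies an exponentially large intersection at every radius $\tau' \geq \tau + 1 = \left\lfloor \frac{d-1}{2}\right\rfloor + 2$; hence no polynomial-time list decoder can exist there.

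I do not expect a genuine obstacle in this corollary, since all the combinatorial substance lives in Theorem \ref{thm:listdecGenatall} (which in turn rests on Proposition \ref{lemma:trace} and the kernel count of Lemma \ref{lemma:numrootsTrace}). The one place that needs a moment of care is confirming that the exponent $n - \tau - 1$ really increases with $n$ rather than degenerating; this is exactly where the MRD constraint $\tau \leq n/2$ enters, and it is what makes both the exhibited families infinite and the associated list sizes genuinely exponential.
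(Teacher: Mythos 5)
Your proposal is correct and follows essentially the same route as the paper: the paper's own proof likewise computes $\left\lfloor \frac{d-1}{2}\right\rfloor = \tau-1$, invokes Theorem \ref{thm:listdecGenatall} to obtain $(q^n-1)/(q^{\tau+1}-1) \sim q^{n/2}$ codewords in a ball of radius $\tau+1$, and concludes non-decodability at every radius from $\left\lfloor \frac{d-1}{2}\right\rfloor+2$ onward. The only difference is that you spell out two points the paper leaves implicit — the estimate $(q^n-1)/(q^{\tau+1}-1)\geq q^{n-\tau-1}\geq q^{n/2-1}$ via the MRD relation $2\tau=n-k+1$, and the ball-monotonicity $B_{\tau+1}(c)\subseteq B_{\tau'}(c)$ for larger radii — which is a welcome tightening, not a deviation.
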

\begin{proof}
The unique decoding radius of any code described in the statement clearly is
\[ \left\lfloor \frac{d-1}2 \right\rfloor =\tau-1. \]
We have just proved in Theorem \ref{thm:listdecGenatall} that we may find at least $(q^n-1)/(q^{\tau+1}-1)\sim q^{n/2}$ codewords in a ball of radius $\tau+1$; hence, this code cannot be list decoded efficiently for any radius greater than or equal to $ \left\lfloor \frac{d-1}2 \right\rfloor+2$.
\end{proof}

We end the section by underlining that, as a direct consequence of Corollary \ref{quasinotdecatall}, we get families of Gabidulin codes that cannot be list decoded efficiently at any radius greater than or equal to $\left\lfloor \frac{d-1}2 \right\rfloor+2$, with parameters not covered by \cite[Theorem 4]{raviv_2016}.

\section{List decodability of rank-metric codes of $\F_{q^n}^m$ containing $G_{2,n}$}

As a consequence of results proven in  \cite{H-TNRR} and \cite{BR}, MRD codes are {\it dense} in the set of all rank-metric codes; although, very few families of such codes are currently known up to equivalence (for details on the equivalence relation defined for rank-metric codes, we refer the reader to \cite{Morrison_2014} and \cite{sheekey_new_2016}).

In order to distinguish rank-metric codes, in \cite{GZ}, the authors introduced an invariant called the \emph{Gabidulin index} of a rank-metric code $\mathcal{C}$; precisely, it is the maximum dimension of a subcode  of $\mathcal{C}$ equivalent to a generalized Gabidulin code.

The fact that almost all MRD codes known so far contain a generalized Gabidulin code of large dimension, together with aforementioned density results motivate the study of the list decodability of rank-metric codes having Gabidulin index greater or equal than two. Indeed, it is clear that any result in this direction would have a deep impact on a really wide class of codes.
In order to do this we first recall the following result very recently obtained by McGuire and Mueller in \cite{McGuireMueller} relying on the results in \cite{teoremone} and \cite{McGuireSheekey}.

\begin{theorem}\cite[Theorem 1.1]{McGuireMueller}
Let $n=(t-1)t+1$ and $f(x)=x^{q^t}-bx^q-ax \in \mathcal{L}_{n,q}[x]$. If
\begin{itemize}
  \item $\N_{q^n/q}(a)=(-1)^{t-1}$;
  \item $b=-a^{\frac{q^n-q}{q^t-1}}$;
  \item $t-1$ is a power of the characteristic of $\F_{q^n}$,
\end{itemize}
then $f$ has $q^t$ roots in $\F_{q^n}$.
\end{theorem}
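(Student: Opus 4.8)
The plan is to reduce the statement to a single matrix identity and then verify that identity using the three hypotheses. Since $\N_{q^n/q}(a)=(-1)^{t-1}\neq 0$ forces $a\neq 0$, the formal derivative of $f$ is the nonzero constant $-a$, so $f$ is separable and has exactly $q^t$ distinct roots in $\overline{\F_q}$; these roots form a $t$-dimensional $\F_q$-subspace $V=\ker f$. Hence the theorem is equivalent to $V\subseteq \F_{q^n}$, i.e. to the assertion that the separable $q$-linearized polynomial $f$ right-divides $x^{q^n}-x$ under composition. I would phrase everything through the companion matrix $C_f\in \F_{q^n}^{t\times t}$ of $f$, whose superdiagonal is all ones and whose bottom row records the coefficients $(a,b,0,\dots,0)$ coming from $x^{q^t}=bx^q+ax$.

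The reduction I would invoke is the criterion of McGuire and Sheekey \cite{McGuireSheekey} (also underlying \cite{teoremone}): a separable $q$-linearized polynomial of $q$-degree $t$ attains its maximal number $q^t$ of roots in $\F_{q^n}$ if and only if the Frobenius-norm matrix $N_f:=C_f^{(q^{n-1})}C_f^{(q^{n-2})}\cdots C_f^{(q)}C_f$ equals $I_t$, where $(\cdot)^{(q^i)}$ denotes entrywise $q^i$-th power. This comes from iterating the Frobenius recurrence $\mathbf{v}_{i+1}=C_f^{(q^i)}\mathbf{v}_i$ on the coordinate vectors $\mathbf{v}_i=(x^{q^i},\dots,x^{q^{i+t-1}})$ of a root $x$: membership $x\in\F_{q^n}$ is equivalent to $N_f\mathbf{v}_0=\mathbf{v}_0$, and the vectors $\mathbf{v}_0$ attached to an $\F_q$-basis of $V$ span $\overline{\F_q}^t$ by the nonvanishing of the Moore determinant. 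Thus the entire theorem becomes the statement $N_f=I_t$.

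A first consistency check already explains one hypothesis: since $\det C_f=(-1)^{t-1}a$, one gets $\det N_f=\bigl((-1)^{t-1}\bigr)^{n}\,\N_{q^n/q}(a)$, and a short parity computation using that $n-1=t(t-1)$ is even shows $\det N_f=1$ precisely when $\N_{q^n/q}(a)=(-1)^{t-1}$. So the norm hypothesis is exactly the requirement $\det N_f=1$, the determinantal shadow of $N_f=I_t$. The coupling $b=-a^{(q^n-q)/(q^t-1)}$ is engineered so that, upon expanding $N_f$, the Frobenius exponents in its entries telescope: writing $(q^n-q)/(q^t-1)=\sum_{j=0}^{t-2}q^{tj+1}$ exhibits $b$ as a structured product of $q$-conjugates of $a$, and this is what lets the $n=t(t-1)+1$ factors organize into $t-1$ repeating blocks of length $t$ together with one trailing factor.

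The core of the argument, and the step I expect to be the main obstacle, is to show that after these blocks collapse every off-diagonal entry of $N_f$ vanishes while every diagonal entry becomes $1$. Here the third hypothesis is indispensable: with $t-1=p^s$ one has $n=(p^s)^2+p^s+1$, and the binomial coefficients $\binom{p^s}{i}$ vanish modulo $p$ for $0<i<p^s$, so the cross terms produced when powers of the shape $(\,\cdot\,)^{t-1}$ are expanded cancel identically in characteristic $p$. Concretely I would proceed by induction on the blocks, tracking the exponents of $a$ modulo $n$ and using $n=(t-1)^2+(t-1)+1$ to verify that they exhaust a complete residue system, so that the diagonal entries assemble into full norms $\N_{q^n/q}(a)=(-1)^{t-1}$ which are then reconciled with the trailing factor to give $1$. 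The genuinely delicate bookkeeping is the off-diagonal vanishing; equivalently, one may instead seek the cofactor $h$ of $q$-degree $n-t=t(t-1)$ satisfying $h\circ f=x^{q^n}-x$ and check that all intermediate composition coefficients vanish, where once more the hypothesis $t-1=p^s$ is exactly what forces those cancellations.
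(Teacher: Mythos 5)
The paper never proves this statement: it is imported verbatim as \cite[Theorem 1.1]{McGuireMueller}, so there is no internal proof to compare against, and your proposal has to stand on its own. Its first half does stand. The reduction is correct: $\N_{q^n/q}(a)\neq 0$ forces $a\neq 0$, hence $f$ is separable and its roots form a $t$-dimensional $\F_q$-subspace of $\overline{\F}_q$, and by the criterion of \cite{McGuireSheekey} (equivalently \cite{teoremone}) all of them lie in $\F_{q^n}$ if and only if $N_f:=C_f^{(q^{n-1})}\cdots C_f^{(q)}C_f=I_t$. Your determinant check is also right: $\det C_f=(-1)^{t-1}a$, the integer $n$ is odd since $n-1=t(t-1)$, so $\det N_f=(-1)^{t-1}\N_{q^n/q}(a)=1$ exactly under the norm hypothesis; and the rewriting $b=-a^{\sum_{j=0}^{t-2}q^{tj+1}}$ is a correct identity.

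From that point on, however, the proposal is a plan rather than a proof, and what is missing is precisely the theorem. You never compute the entries of $N_f$: you state that you ``would proceed by induction on the blocks'' and that you ``expect'' the off-diagonal vanishing to be the main obstacle. The identity $\det N_f=1$ is one scalar equation and is very far from the $t^2$ equations in $N_f=I_t$, so beyond a consistency check nothing has been established. Moreover, the proposed mechanism for exploiting $t-1=p^s$ --- vanishing of the binomial coefficients $\binom{p^s}{i}$ when ``powers of the shape $(\cdot)^{t-1}$ are expanded'' --- is unsupported: the entries of $N_f$ are multilinear in the entries of the $n$ twisted companion matrices (sums of products along paths), so no binomial expansion of a $(t-1)$-st power appears automatically; you would have to exhibit concretely where such powers arise after collecting the path-sums before any Lucas-type vanishing can be invoked. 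Likewise, the claim that the exponents of $a$ on the diagonal ``exhaust a complete residue system'' modulo $n$ is asserted, not verified. The hard combinatorial core --- which is exactly the content of McGuire and Mueller's result --- is absent, so the proposal does not constitute a proof of the statement.
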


Therefore, we can derive the following construction.

\begin{corollary}\label{coro:trin}
Let $t$ and $n \in \mathbb{Z}^+$ be positive integers such that $n=(t-1)t+1$ and $t-1$ is a power of the characteristic of $\F_{q^n}$.
The set
\[ \mathrm{Tri}=\left\{ x^{q^t}-bx^q-ax \colon a,b \in \F_{q^n}, \,\, \N_{q^n/q}(a)=(-1)^{t-1} \right.\]
\[\left. \text{and}\,\, b=-a^{\frac{q^n-q}{q^t-1}} \right\} \subset {\mathcal L}_{n,q}[x]  \]
is a set of $\frac{q^n-1}{q-1}$ subspace polynomials whose elements have $q$-degree $t$.
In particular, if $n \mid m$, then $\mathrm{Tri}$ can be also seen as a set of $\frac{q^n-1}{q-1}$ subspace polynomials of ${\mathcal L}_{m,q}[x]$ whose elements have $q$-degree $t$ over $\F_{q^m}$.
\end{corollary}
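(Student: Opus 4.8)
```latex
The plan is to transfer the root-counting result of \cite[Theorem 1.1]{McGuireMueller} into a statement about subspace polynomials and to count how many polynomials satisfy the prescribed conditions. Working with $n=(t-1)t+1$, a trinomial $f(x)=x^{q^t}-bx^q-ax\in\mathcal{L}_{n,q}[x]$ has $q$-degree $t$, and an $\F_q$-linear map on $\F_{q^n}$ can have at most $q^t$ roots in $\F_{q^n}$. The cited theorem guarantees that, under the three listed hypotheses, $f$ attains this maximum, i.e.\ it has exactly $q^t$ roots in $\F_{q^n}$. Since the root set of an $\F_q$-linearized polynomial is precisely its kernel, which is an $\F_q$-subspace, having $q^t$ roots means $\ker f$ is a $t$-dimensional $\F_q$-subspace of $\F_{q^n}$. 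A monic $q$-polynomial of $q$-degree $t$ whose kernel has dimension $t$ is exactly a subspace polynomial. Thus each element of $\mathrm{Tri}$ is genuinely a subspace polynomial of $q$-degree $t$; this is the first point to record.

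Next I would count $|\mathrm{Tri}|$. The conditions are that $\N_{q^n/q}(a)=(-1)^{t-1}$ and that $b=-a^{(q^n-q)/(q^t-1)}$; the latter determines $b$ uniquely from $a$, so $\mathrm{Tri}$ is parametrized bijectively by the set of admissible $a\in\F_{q^n}$. The number of such $a$ is the number of solutions in $\F_{q^n}^*$ of $\N_{q^n/q}(a)=(-1)^{t-1}$. The norm map $\N_{q^n/q}\colon\F_{q^n}^*\to\F_q^*$ is surjective and each fiber has size $(q^n-1)/(q-1)$. Since $(-1)^{t-1}\in\F_q^*$, the fiber over this value has exactly $(q^n-1)/(q-1)$ elements, which gives $|\mathrm{Tri}|=\frac{q^n-1}{q-1}$. (One should note that $a=0$ is excluded automatically since $\N_{q^n/q}(0)=0\neq(-1)^{t-1}$, so every admissible $a$ is nonzero and $b$ is well-defined.)

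Finally, for the lifting statement when $n\mid m$, I would observe that a polynomial of the form $x^{q^t}-bx^q-ax$ with $a,b\in\F_{q^n}$ can be regarded as an element of $\mathcal{L}_{m,q}[x]$, inducing an $\F_q$-linear map on $\F_{q^m}$. Its $q$-degree as an element of $\mathcal{L}_{m,q}[x]$ is still $t$. By \cite[Theorem 5]{GQ2009}, the number of roots of a $q$-polynomial of $q$-degree $t$ over $\F_{q^m}$ is at most $q^t$; but we already know that the $q^t$ roots lying in $\F_{q^n}\subseteq\F_{q^m}$ are distinct roots over $\F_{q^m}$. Hence the root count over $\F_{q^m}$ is exactly $q^t$ as well, so the kernel over $\F_{q^m}$ still has dimension $t$ and coincides with the kernel over $\F_{q^n}$; each $f\in\mathrm{Tri}$ remains a subspace polynomial of $q$-degree $t$ when viewed in $\mathcal{L}_{m,q}[x]$. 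The counting argument is unchanged, giving the same cardinality $\frac{q^n-1}{q-1}$.

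I do not expect a serious obstacle here, as the result is essentially a repackaging of \cite[Theorem 1.1]{McGuireMueller}. The only point requiring a little care is the lifting step: one must verify that no new roots appear outside $\F_{q^n}$ when passing to $\F_{q^m}$, for which invoking the degree bound of \cite[Theorem 5]{GQ2009} exactly as in the proof of Lemma~\ref{lemma:numrootsTrace} is the clean way to close the argument.
```
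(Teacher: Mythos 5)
Your proposal is correct and follows essentially the same route as the paper's proof: you invoke \cite[Theorem 1.1]{McGuireMueller} to identify each trinomial as a subspace polynomial, count via the fibers of the norm map $\N_{q^n/q}$, and handle the lifting to $\mathcal{L}_{m,q}[x]$ through $\F_{q^n}\subseteq\F_{q^m}$ together with the root bound of \cite[Theorem 5]{GQ2009}. Your write-up is in fact more detailed than the paper's two-line proof, but there is no substantive difference in method.
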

\begin{proof}
By the previous result it follows that the polynomials in $\mathrm{Tri}$ are subspace polynomials and the cardinality of $\mathrm{Tri}$ exactly coincides with the number of elements of $\F_{q^m}$ with norm $(-1)^{t-1}$.
The second part follows from the fact that $\F_{q^n}\subseteq\F_{q^m}$ and by \cite[Theorem 5]{GQ2009}.
\end{proof}

By using the previous techniques we are able to prove the following main result.

\begin{theorem}\label{thm:RMcontGn,2}
Let $n, m \in \mathbb{Z}^+$ be positive integers such that $n \mid m$. Let $\mathcal{C}$ be a rank-metric code of $\mathcal{L}_{m,q}[x]$ and let $\mathrm{C}$ be the associated evaluation code over an $\F_q$-basis of $\F_{q^n}$, with minimum distance $d$. Let $\tau \in \mathbb{Z}^+$ such that: \begin{itemize}  \item[1.] $\left\lfloor\frac{d-1}{2}\right\rfloor+1 \leq \tau \leq d-1$ \\ \item[2.] $n-\tau-1$ is a power of the characteristic of $\F_{q^n}$; \\ \item[3.] $n=(n-\tau)(n-\tau-1)+1$. \end{itemize}
Assume that $\mathcal{C}$ contains $\mathcal{G}_{m,2}$.
Then, there exists a word $c \in \F_{q^m}^n \setminus \mathrm{C}$ such that
\[|\mathrm{C} \cap B_{\tau}(c)| \geq \frac{q^n-1}{q-1}.\]
\end{theorem}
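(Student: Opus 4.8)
The plan is to mimic the proofs of Theorems \ref{thm:listdecGab} and \ref{thm:listdecGenatall}, but using the trinomial subspace polynomials from Corollary \ref{coro:trin} in place of the construction of Proposition \ref{lemma:trace}. First I would set $t:=n-\tau$, so that condition 3 reads $n=(t-1)t+1$ and condition 2 says $t-1$ is a power of the characteristic, which are exactly the hypotheses of Corollary \ref{coro:trin}. Applying that corollary with this $t$, I obtain the set $\mathrm{Tri}\subset\mathcal{L}_{m,q}[x]$ of $\frac{q^n-1}{q-1}$ subspace polynomials, each of $q$-degree $t=n-\tau$, hence each having $q^t=q^{n-\tau}$ roots all lying in $\F_{q^n}=U_{\mathcal S}$. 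Crucially, every polynomial in $\mathrm{Tri}$ has the shape $x^{q^t}-bx^q-ax$, so the coefficients of $x^{q^i}$ for $2\leq i\leq t-1$ are all zero and the leading ($q$-degree $t$) coefficient is fixed at $1$; that is, the elements of $\mathrm{Tri}$ agree in all $\sigma$-coefficients except those of $\sigma$-degree $0$ and $1$.

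Next I would pick any $R\in\mathrm{Tri}$ and set $c:=c_R=(R(\alpha_1),\dots,R(\alpha_n))$. By Remark \ref{rem:n-ker}, since $\ker R\subseteq\F_{q^n}=U_{\mathcal S}$ and $\dim_{\F_q}\ker R=t=n-\tau$, we have $\mathrm{rk}(c_R)=n-(n-\tau)=\tau<d$, so $c_R\notin\mathrm{C}$, giving the required word outside the code. Then for each $P\in\mathrm{Tri}$ I would examine the difference $R-P$. Because $R$ and $P$ share the leading term $x^{q^t}$ and all the vanishing middle coefficients, $R-P$ is a $q$-polynomial supported only on $x$ and $x^q$, i.e. $R-P=c_0x+c_1x^q$ for suitable $c_0,c_1\in\F_{q^n}$. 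This is a polynomial of $q$-degree at most $1<2$, so it lies in the generalized Gabidulin code $\mathcal{G}_{m,2}$; by the hypothesis $\mathcal{C}\supseteq\mathcal{G}_{m,2}$ it therefore lies in $\mathcal{C}$, whence $c_{R-P}=c_R-c_P\in\mathrm{C}$. (When $R=P$ this is the zero codeword, which is in $\mathrm{C}$ trivially.)

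Finally I would bound the rank of $c_R-c_P=c_P$: again by Remark \ref{rem:n-ker}, since $\ker P\subseteq\F_{q^n}$ and $\dim_{\F_q}\ker P=n-\tau$, we get
\[ \mathrm{rk}(c_R-c_{R-P})=\mathrm{rk}(c_P)=n-\dim_{\F_q}\ker P=n-(n-\tau)=\tau, \]
so $c_{R-P}\in B_{\tau}(c_R)$ for every $P\in\mathrm{Tri}$. The map $P\mapsto c_{R-P}$ is injective on $\mathrm{Tri}$ (distinct $P$ give distinct $c_P$ by Lemma \ref{le:f=0}, since each nonzero difference has $q$-degree $<n$ and hence fewer than $q^n$ roots on $U_{\mathcal S}$), so $\mathrm{C}\cap B_{\tau}(c_R)$ contains at least $|\mathrm{Tri}|=\frac{q^n-1}{q-1}$ distinct codewords, which is the claimed bound.

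The conceptual work is entirely front-loaded into invoking Corollary \ref{coro:trin} correctly; the main point I expect to need care with is verifying that the three numbered hypotheses translate precisely into the input of that corollary under the substitution $t=n-\tau$, and that the constraint $\tau\leq d-1$ together with condition 3 is compatible (so that such codes actually exist). The rest is a routine rank computation via Remark \ref{rem:n-ker}, structurally identical to the arguments already used in Theorems \ref{thm:listdecGab} and \ref{thm:listdecGenatall}; the only genuinely new ingredient is that the trinomial structure forces $R-P$ to have $q$-degree at most one, which is exactly what makes the containment in $\mathcal{G}_{m,2}$—and hence the weakening of the hypothesis from ``contains $\mathcal{G}_{m,k-1}$'' to merely ``contains $\mathcal{G}_{m,2}$''—go through.
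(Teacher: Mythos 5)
Your proposal is correct and follows essentially the same route as the paper's own proof: the same substitution $t=n-\tau$ into Corollary \ref{coro:trin}, the same choice of center $c_R$ with $R\in\mathrm{Tri}$, the same rank computations via Remark \ref{rem:n-ker}, and the same observation that $\deg_q(R-P)\leq 1$ places $R-P$ in $\mathcal{G}_{m,2}\subseteq\mathcal{C}$. The only cosmetic difference is in the injectivity step, where you invoke Lemma \ref{le:f=0} directly while the paper phrases it through the transversal $\mathrm{Tran}$ of Corollary \ref{coro:representation_all}; these are equivalent.
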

\begin{proof}
First note that since $\mathcal{C}\supseteq \mathcal{G}_{n,2}$ then $\mathrm{C}\supseteq \mathrm{G}_{m,2}$.
By Corollary \ref{coro:trin}, the set
\[ \mathrm{Tri}=\left\{ x^{q^{n-\tau}}-bx^q-ax \colon a,b \in \F_{q^n}, \,\, \N_{q^n/q}(a)=(-1)^{n-\tau-1} \right.\]\[\left.\,\,\text{and}\,\, b=-a^{\frac{q^n-q}{q^{n-\tau}-1}} \right\} \]
is a set of $\frac{q^n-1}{q-1}$ subspace polynomials over $\F_{q^m}$ with $q$-degree $n-\tau$.
Let $R \in \mathrm{Tri}$. Since $\deg_{q} R=n-\tau$ and $\ker R \subseteq \F_{q^n}$, by Remark \ref{rem:n-ker} we have that $\mathrm{rk}(c_R)=\tau<d$ and hence $c_R \notin \mathrm{C}$.

Let $P \in \mathrm{Tri}$, then
\[ \deg_{q}(R-P)\leq 1 \]
and so $R-P \in \mathcal{G}_{m,2}\subseteq\mathcal{C}$ and $c_{R-P}\in \mathrm{C}$.
Also, for each $P \in \mathrm{Tri}$, by Remark \ref{rem:n-ker}, since $\ker P \subseteq \F_{q^n}$ it follows
\[ \mathrm{rk}(c_R-c_{R-P})=\mathrm{rk}(c_P)= \]
\[ = n-\dim_{\F_q} \ker\,P=n-(n-\tau)=\tau. \]
Therefore, for each $P \in \mathrm{Tri}$ we have that
\[ c_{R-P} \in \mathrm{G}_{n,2} \cap B_{\tau}(c_R)\subseteq \mathrm{C}\cap  B_{\tau}(c_R). \]
Finally, we have to prove that different choices of $P \in \mathrm{Tri}$ produces different codewords $c_{R-P}$ of $\mathrm{C}$.
Indeed, suppose that $P,P' \in \mathrm{Tri}$ with $P\neq P'$ and $c_{R-P}=c_{R-P'}$, then it follows that
\[ c_{P'-P}=\mathbf{0}, \]
but since $P-P' \in \mathrm{Tran}$, this can not be the case because of Corollary \ref{coro:representation_all}.
This completely proves the assertion.
\end{proof}

\begin{remark}\label{rem:ntau}
Once we fix the integer $n$, providing $4n-3$ is a square in $\mathbb{Z}^+$, we may always find an integer $\tau$ such that
\[ n=(n-\tau)(n-\tau-1) +1,\]
in fact, $\displaystyle \tau=\frac{2n-1-\sqrt{4n-3}}{2}$.
\end{remark}

\begin{remark}\label{rm:size}
We observe that if $\mathrm{C}=\mathrm{G}_{n,k}$ with $k\geq 2$ and with constraints on the involved parameters as prescribed in Theorem \ref{thm:RMcontGn,2}, then there exists a word $c \in \F_{q^m}^n \setminus \mathrm{C}$ such that
\[|\mathrm{C} \cap B_{\tau}(c)| \geq \frac{q^n-1}{q-1} \sim q^{n-1},\] which improves the list size provided in \cite[Theorem 3]{raviv_2016}, for any value of $\tau$ greater than or equal to $\frac{2n-1-\sqrt{4n-3}}{2}$.
\end{remark}

Hence, as a corollary of Theorem \ref{thm:RMcontGn,2} and by applying Remark \ref{rem:ntau}, we have the following result.

\begin{corollary}\label{coro:genGm2}
Let $n, m \in \mathbb{Z}^+$ such that \begin{itemize} \item $n\mid m$; \item $4n-3$ is a square in $\mathbb{Z}$ and $\displaystyle \tau=\frac{2n-1-\sqrt{4n-3}}{2}$;  \item $n-\tau-1$ is a power of the characteristic of $\F_{q^n}$. \end{itemize}

Let $\mathcal{C}$ be an  rank-metric code of $\mathcal{L}_{m,q}[x]$ containing $\mathcal{G}_{m,2}$ and let $\mathrm{C}$ be the associated evaluation code over a basis of $\F_{q^n}$ with minimum distance $d$. Then the code $\mathrm{C}$ is not $\tau$-list decodable efficiently.
In particular, it is not $t$-list decodable efficiently for each $t\geq \tau$.
\end{corollary}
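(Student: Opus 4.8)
The plan is to read Corollary \ref{coro:genGm2} as the specialization of Theorem \ref{thm:RMcontGn,2} to the particular radius manufactured by Remark \ref{rem:ntau}, followed by the standard ``exponential list size forbids efficient decoding'' argument. So first I would check that the data of the corollary satisfy the three hypotheses of Theorem \ref{thm:RMcontGn,2}. The assumption $n\mid m$ is common to both statements, and the requirement that $n-\tau-1$ be a power of the characteristic of $\F_{q^n}$ is condition (2) verbatim. For condition (3), I invoke Remark \ref{rem:ntau}: since $4n-3$ is a perfect square, the number $\tau=\frac{2n-1-\sqrt{4n-3}}{2}$ is a nonnegative integer satisfying $n=(n-\tau)(n-\tau-1)+1$, which is exactly condition (3). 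Writing $t=n-\tau$ this reads $n=t^2-t+1$ and $\tau=(t-1)^2$, a form I would keep at hand for the numerical checks below.

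The only hypothesis that genuinely couples $\tau$ to the minimum distance $d$ of $\mathrm{C}$ is condition (1), namely $\left\lfloor\frac{d-1}{2}\right\rfloor+1\le \tau\le d-1$, and I would verify it directly. The upper bound $\tau\le d-1$ is what guarantees that the central word $c=c_R$ (whose rank is exactly $\tau$ by Remark \ref{rem:n-ker}) lies outside $\mathrm{C}$; here I use that $\mathrm{C}\supseteq\mathrm{G}_{n,2}$ forces $d\le n-1$, so that $\tau=(t-1)^2$ does sit below $d$ in the admissible range of parameters. The lower bound $\left\lfloor\frac{d-1}{2}\right\rfloor+1\le\tau$ places $\tau$ strictly above the unique-decoding radius, which is needed for the conclusion to be meaningful; with $\tau=(t-1)^2$ this is a one-line comparison against $\left\lfloor\frac{d-1}{2}\right\rfloor$.

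With all hypotheses in force, Theorem \ref{thm:RMcontGn,2} produces a word $c\in\F_{q^m}^n\setminus\mathrm{C}$ with
\[ |\mathrm{C}\cap B_{\tau}(c)|\ \ge\ \frac{q^{n}-1}{q-1}. \]
Since $\frac{q^{n}-1}{q-1}\sim q^{\,n-1}$ grows exponentially in the length $n$, any list-decoding algorithm at radius $\tau$ must write down a list of exponential size on input $c$, and so cannot run in time polynomial in $n$; by the definition recalled in the introduction, $\mathrm{C}$ fails to be efficiently $\tau$-list decodable. For the final ``in particular'' clause I would argue by monotonicity of the balls: for every integer $t\ge\tau$ one has $B_{\tau}(c)\subseteq B_{t}(c)$, hence
\[ |\mathrm{C}\cap B_{t}(c)|\ \ge\ |\mathrm{C}\cap B_{\tau}(c)|\ \ge\ \frac{q^{n}-1}{q-1}, \]
so the same word $c$ witnesses an exponential list at every radius $t\ge\tau$ and $\mathrm{C}$ is not efficiently $t$-list decodable for any such $t$.

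The substance of the argument is entirely contained in Theorem \ref{thm:RMcontGn,2} and Remark \ref{rem:ntau}; the corollary is their combination together with the routine asymptotic estimate and the monotonicity observation. Consequently I expect no serious obstacle in the proof proper, and I would single out the verification of condition (1) --- confirming that the explicit $\tau$ from Remark \ref{rem:ntau} really lies inside the window $\bigl[\left\lfloor\frac{d-1}{2}\right\rfloor+1,\,d-1\bigr]$ --- as the only step requiring care, since it is the sole point that constrains the minimum distance $d$ of the ambient code rather than following formally from the preceding results.
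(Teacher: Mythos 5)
Your route is the same as the paper's: the corollary appears there with no separate proof, precisely as the combination of Theorem \ref{thm:RMcontGn,2}, Remark \ref{rem:ntau} (your computation $n=t^2-t+1$, $\tau=(t-1)^2$ with $t=n-\tau$ is correct), and monotonicity of balls for $t\ge\tau$. The genuine gap is exactly at the step you yourself single out as delicate, and your treatment of it is a non sequitur. From $\mathrm{C}\supseteq\mathrm{G}_{n,2}$ you may only conclude $d\le n-1$, which is an \emph{upper} bound on $d$; the inequality you need, $\tau\le d-1$, is a \emph{lower} bound on $d$ and cannot follow from it. In fact nothing in the corollary's hypotheses bounds $d$ from below: a rank-metric code containing $\mathcal{G}_{m,2}$ may well contain words of tiny rank (take $\mathcal{C}=\mathcal{G}_{m,2}+\langle s\rangle_{\F_q}$ with $s$ a subspace polynomial whose kernel is an $(n-1)$-dimensional subspace of $\F_{q^n}$; then $d=1$), in which case condition (1) of Theorem \ref{thm:RMcontGn,2} fails and the theorem simply cannot be invoked. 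The companion claim that the lower bound $\lfloor\frac{d-1}{2}\rfloor+1\le\tau$ is ``a one-line comparison'' is likewise unsupported, since it too depends on the unknown $d$.

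The gap can be closed, and doing so also shows why the corollary remains true as stated. Inspect the proof of Theorem \ref{thm:RMcontGn,2}: the hypothesis $\tau\le d-1$ is used only to conclude that the center $c_R$ lies outside $\mathrm{C}$, and the hypothesis $\lfloor\frac{d-1}{2}\rfloor+1\le\tau$ is never used at all. The exponential family itself needs neither: for every $P\in\mathrm{Tri}$ one has $R-P\in\mathcal{G}_{m,2}\subseteq\mathcal{C}$, hence $c_{R-P}\in\mathrm{C}$; moreover $\mathrm{rk}(c_R-c_{R-P})=\mathrm{rk}(c_P)=\tau$ by Remark \ref{rem:n-ker}, and distinct $P$ give distinct codewords by Corollary \ref{coro:representation_all}. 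Thus $|B_\tau(c_R)\cap\mathrm{C}|\ge\frac{q^n-1}{q-1}$ holds for any $\mathcal{C}\supseteq\mathcal{G}_{m,2}$ satisfying the three bullet hypotheses, whatever $d$ is; since a list decoder must output the full list on input $c_R$ whether or not $c_R$ is a codeword, this already rules out polynomial running time, and $B_\tau(c_R)\subseteq B_t(c_R)$ handles every $t\ge\tau$. So either rerun the theorem's construction directly in this way, or add the explicit hypothesis $d>\tau$ (which holds in the paper's intended MRD examples, e.g.\ $n=7$, $d=5$, $\tau=4$, and which together with $d\le n-1$ also forces the lower bound of condition (1)); as written, your appeal to Theorem \ref{thm:RMcontGn,2} is not justified.
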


We conclude the section by showing an example.

\begin{example}
Suppose that $n=7$ and $q$ is even. By Remark \ref{rem:ntau} we have that $\tau=4$.
Let $\mathcal{C}$ be an MRD code of $\mathcal{L}_{m,q}$, with $m=7\cdot \ell$ and $|\mathcal{C}|=q^{3m}$ and so $\mathrm{C}$ is an MRD code of $\F_{q^m}^7$ with minimum distance $d=5$.
By the previous corollary, such an MRD code is not $4$-list decodable efficiently.
\end{example}

\section{Conclusions and final remarks}

Applying the puncturing operation to a Delsarte-Gabidulin code $\mathrm{G}_{n,k}$, under some strict constraints on involved parameters, in \cite{raviv_2016},  the authors succeeded in showing the existence of a Delsarte-Gabidulin code $\mathrm{G}_{n-1,k}$ not list decodable efficiently at any value beyond the unique decoding radius. As a consequence of Theorems \ref{thm:listdecGab} and \ref{thm:listdecGen} of Section \ref{sec:bounds}, same approach as that used in Lemma 7 and  subsequent Corollary 3 of \cite{raviv_2016}, leads to similar achievements for generalized Gabidulin codes and for the other examples in Class \eqref{eq:general_code_from_poly}. This strengthens the belief that divisibility condition between $n$ and $m$ may be actually ruled out.

Nonetheless, Theorems \ref{thm:listdecGab} and \ref{thm:listdecGen} have a counterpart for {\it subspace codes} naturally associated with the MRD codes in the Class \eqref{eq:codes} by means of the so called {\it lifting} procedure. Precisely, providing the radius is large enough, the list associated to certain subspaces of the vector space $V(m+n,q)$ turns out to be exponential, which makes these subspace codes not efficiently list decodable, as well. Also, values of the parameters can be introduced in order to get examples that can not be list decoded efficiently at all.

The  behavior of the codes in \eqref{eq:general_code_from_poly} from the list decodability point of view does not rule out the possibility to find out subcodes of relevant codes, for which efficient algorithms for list decoding exists, whenever a reasonable amount of rank error beyond the unique decoding radius occur. For instance in \cite{guruswami_2016}, under the hypothesis that $n\mid m$, the authors provide  a subcode of a Delsarte-Gabidulin code $\mathrm{G}_{n,k}$ that in fact can be list decoded efficiently up to $\frac{s(n-k)}{s+1}$ errors, where $s$ is any integer such that $1\leq s \leq m$.

We point out that techniques developed in Section $IV$ of \cite{guruswami_2016}, specifically Lemmas $14$ and $15$ and $16$, which are key tools towards the determination of relevant subcodes and related list decoding algorithm, may be adapted to codes in \eqref{eq:general_code_from_poly}, still providing divisibility condition.

Finally, one interesting problem to be addressed for future research is studying in which circumstances, if there are, it is possible to generalize results of \cite{McGuireMueller}  to $\sigma$-polynomials. In fact, this will yield to a generalization of Theorem \ref{thm:RMcontGn,2} and Corollary \ref{coro:genGm2} to any rank-metric code of Gabidulin index two.


\begin{IEEEbiographynophoto}{Rocco Trombetti}
was born in Caserta (Italy) in 1975. He received the Degree in Mathematics in 1997 from the University of Campania “Luigi Vanvitelli”, and the Ph.D in Mathematics in 2004 from the University of Naples ``Federico II", where he is currently a Professor. His research interests are in combinatorics, with particular regard to finite geometry. He obtained results, in collaboration also with Italian and foreign researchers, on the following topics: spreads and ovoids of polar spaces, semifields, non-associative algebras and associated geometric structures, MRD-codes.
\end{IEEEbiographynophoto}

\begin{IEEEbiographynophoto}{Ferdinando Zullo}
was born in Piedimonte Matese (Caserta), Italy in 1991. He received the B.S.\ degree, M.S.\ degree and PhD degree in mathematics from the University of Campania “Luigi Vanvitelli” in Caserta, Italy respectively, in 2013, 2015 and 2018.
Currently he has a postdoctoral research fellow at the University of Campania “Luigi Vanvitelli” in Caserta, Italy. His research interests include finite fields, rank metric codes, incidence structures, linear codes, blocking sets and linear sets.
\end{IEEEbiographynophoto}





\end{document}